\pdfoutput=1
\ifx\synctex\undefined\else\synctex=1\fi
\documentclass[copyright,creativecommons]{eptcs}

\usepackage[T1]{fontenc}
\usepackage[utf8]{inputenc}
\usepackage{microtype}

\usepackage{mathtools}
\usepackage{xfrac}
\usepackage{paralist}
\usepackage{bm}
\usepackage{amssymb}
\usepackage{amsthm}
\usepackage{url}

\usepackage[vlined]{algorithm2e}

\usepackage{graphicx}

\usepackage{tikz-dependency}

\newtheorem{theorem}{Theorem}
\newtheorem{proposition}{Proposition}
\newtheorem{example}{Example}[section]
\newtheorem{lemma}{Lemma}
\newtheorem{definition}{Definition}

\newcommand{\N}{\mathbb{N}}

\newcommand{\pr}{R}

\newcommand{\osc}{\mathit{osc}}
\renewcommand{\dim}{\mathit{dim}}
\newcommand{\ID}{\mathit{I}}
\newcommand{\rank}{\mathit{rank}}
\newcommand{\stcomp}[1]{\overline{#1}}
\newcommand{\ann}{\mathit{annot}}

\newcommand{\state}{\mathit{state}}
\newcommand{\tape}{\mathit{tape}}
\newcommand{\stack}{\mathit{stack}}
\newcommand{\yield}{\mathbb{Y}}

\makeatletter

\newcommand{\Rmnum}[1]{\expandafter\@slowromancap\romannumeral #1@}
\makeatother

\def\len#1{{\vert{#1}\vert}}

\title{Bounded-oscillation Pushdown Automata}

\author{Pierre Ganty
\institute{IMDEA Software Institute\\ Madrid, Spain}
\email{pierre.ganty@imdea.org}
\and
Damir Valput
\institute{IMDEA Software Institute\\ Madrid, Spain}
\email{damir.valput@imdea.org}
}

\begin{document}
\maketitle

\begin{abstract}
 We present an underapproximation for context-free languages by filtering out
 runs of the underlying pushdown automaton depending on how the stack height
 evolves over time. In particular, we assign to each run a number quantifying
 the oscillating behavior of the stack along the run. We study languages
 accepted by pushdown automata restricted to \(k\)-oscillating runs.  We relate
 oscillation on pushdown automata with a counterpart restriction on
 context-free grammars.  We also provide a way to filter all but the
 \(k\)-oscillating runs from a given PDA by annotating stack symbols with
 information about the oscillation. Finally, we study closure properties of the
 defined class of languages and the complexity of the \(k\)-emptiness problem
 asking, given a pushdown automaton \(P\) and \(k\geq 0\), whether \(P\) has a
 \(k\)-oscillating run.  We show that, when \(k\) is not part of the input, the
 \(k\)-emptiness problem is NLOGSPACE-complete.
\end{abstract}

\section{Introduction}
Since the inception of context-free languages (CFLs for short), researchers have studied their
properties including how to define “well-behaved” subclasses.  Typically,
subclasses are obtained by posing restrictions excluding some behaviors of
the underlying formalism (context-free grammar or pushdown automaton).
For instance, visibly pushdown automata \cite{DBLP:journals/jacm/AlurM09}
require input symbols to dictate push or pop operations on the stack. Another
restriction is bounding the number of turns~\cite{Ginsburg_1966}---switches
from non-decreasing to non-increasing modes---of the stack over time. 

In all those cases, restrictions are trying to achieve one or more of the following objectives:
\begin{inparaenum}[\upshape(\itshape i\upshape)]
	\item capture a large subset of context-free languages; 
	\item define a subclass with good closure properties (e.g. closure
		to boolean operations, to homomorphism or their inverse, \ldots)
	\item obtain more efficient algorithms (e.g. for parsing);
	\item obtain new decidability results (e.g. language equivalence). 
\end{inparaenum}

In this paper, we define a new restriction that generalizes finite-turn
pushdown automata \cite{Ginsburg_1966}.  Our restriction is based on the
non-trivial yet natural notion of \emph{oscillation}, a measure of how variable
is the stack height over time. To get a glimpse of how our restriction
generalizes finite-turn consider the language \( L = \{ (a^n b^n)^* \mid n\geq
0\}\). A pushdown automaton (PDA for short) deciding \(L\) has to keep
track---using its stack---of the number of symbols ‘\(a\)’: reading an ‘\(a\)’ results
in a push while reading a ‘\(b\)’ results in a pop.  By limiting the number of
turns it thus seems difficult, if at all possible, to capture \(L\): e.g. one
turn allows to capture precisely \(L\cap a^* b^* \), two turns \(L\cap a^* b^*
a^* b^* \), \ldots However no finite number of turns captures \(L\). On the
other hand, restricting the runs of that PDA to those which have an oscillation
of at most \(1\) is enough to capture \(L\) entirely.

The oscillation of PDA runs is defined using a hierarchy of so-called
harmonics.  Harmonics are prototypical sequence of stack moves: order \(1\)
harmonic is \textit{push pop push pop}, order \(2\) harmonic is \textit{push
<order 1 harmonic>  pop push <order 1 harmonic> pop} etc. 
Hence, we say that a PDA run \(r\) is \(k\)-oscillating if the harmonic of
order \(k\) is the greatest harmonic that can be “found” in \(r\).

Equipped with the restriction based on oscillation, we evaluate the
aforementioned objectives.  In particular,
\begin{compactitem}
	\item we study closure properties to boolean operations of the bounded-oscillation languages, we show they are not determinizable and that the problem whether a given context-free language is bounded-oscillation is undecidable. 
	\item we study the \(k\)-emptiness problem which asks, given a PDA
		and a number \(k\),  whether there exists a \(k\)-oscillating run. We show that,
		when \(k\) is not part of the input, the \(k\)-emptiness problem is
		NLOGSPACE-complete. A slight adaptation of the given algorithm solves the
		\(k\)-membership problem: given a PDA \(P\), \(k\), and a word \(w\), does there exist a
		\(k\)-oscillating run of \(P\) accepting input \(w\).
	\item we relate oscillation on PDA with a counterpart
		restriction on context-free grammars. 
		This allows to reformulate some results and their proofs using PDAs instead of context-free grammars. Such reformulations are out of the scope of the paper but let us evoke some possibilities. 
		For example, decidability
		and complexity results in computing procedure summaries for a class of procedural programs~\cite{conf/tacas/GantyIK13}.  
		Also, the decidability of the reachability
		problem for a subclass of Petri nets extended with a stack~\cite{conf/fsttcs/AtigG11}. The previously cited works (also
		\cite{journals/toplas/EsparzaGP14}) sometimes include an unnecessary step translating
		from PDA to CFG and back. Thanks to the relation we prove, translations
		back and forth can be omitted thus obtaining more direct proofs.
\end{compactitem}

As a collateral contribution, let us mention that our proofs propose a novel
framework allowing to reason uniformly about parse trees and PDA runs through
the use of well-parenthesized words. Doing so, we obtain objects which are
simple, and intuitively easy to understand. Incidentally, proofs turn out to be
elegantly simple.  

Finally, we provide a syntactic characterization of bounded-oscillation pushdown
automata in the following sense: given a PDA \(P\) and a number \(k\) we show
that by modifying its stack alphabet and PDA actions, but by keeping unchanged its
input alphabet, we obtain another PDA for the residual language of \(P\) where
only the \(k\)-oscillating runs of \(P\) have been kept.  Because the previous
construction preserves the nature of PDA actions in the sense that a push
remains a push and a pop remains a pop, applying it to visibly pushdown
automata results into visibly pushdown automata with only
\(k\)-oscillating runs.

Missing proofs are given in the appendix in the full version of this paper.

\subparagraph*{Related work.}
Nowotka and Srba \cite{conf/mfcs/NowotkaS07} considered a subclass of PDA they
call height-deterministic pushdown automata. Unlike their class, our class
imposes restrictions on the evolution of the stack over time regardless of the
input.

For context-free language specified by grammars, Esparza et
al.~\cite{conf/dlt/EsparzaKL07,journals/ipl/EsparzaGKL11} relate two measures:
the so-called dimension defined on parse trees and the index defined on
derivations. Luttenberger and Schlund~\cite{journals/iandc/LuttenbergerS16}
made a step further and proved the dimension and path-width of parse trees are
in linear relationship. Our work instates the notion of oscillation defined on
PDA runs and establishes a linear relationship with the dimension of
parse trees. In the process, we address the challenge of connecting notions
formulated for equivalent yet different formalisms.

Wechsung \cite{conf/fct/Wechsung79} studies PDA runs by representing them in
the 2-dimensional plane. Through a graphical
notion of derivative applied on the representation of PDA runs, he formulates
a notion of oscillation.  Although Wechsung provides critical insights on
oscillation, his definitions are ambiguous and lack proper formalization. We go
further by proposing a clean, formal, and language based definition of
oscillation.

\section{Preliminaries}
An \emph{alphabet} \(\Sigma\) is a nonempty finite set of \emph{symbols}. A
word \(w\) is a finite sequence of symbols of \(\Sigma\), i.e \(w\in\Sigma^*\).
We denote by \(\len{w}\) the length of \(w\). Further define \( (w)_i \) as the
\(i\)-th symbol of \(w\) if \(1 \leq i \leq \len{w}\) and \(\varepsilon\)
otherwise. Hence, \(w = (w)_1 \ldots (w)_{\len{w}}\). A language is a set of
words.

A \emph{pushdown automaton} (PDA) is a tuple \( (Q, \Sigma, \Gamma, \delta, q_0, \gamma_0)\) where:
\begin{compactitem}
\item $Q$ is finite set of \emph{states} including \(q_0\), the \emph{start state};
\item $\Sigma$ is an alphabet called \emph{input alphabet};
\item $\Gamma$ is finite set of \emph{stack symbols} (\emph{pushdown alphabet}) including \(\gamma_0\), the \emph{start stack symbol};
\item $\delta$ is a finite subset of \( Q \times (\Sigma\cup\{\varepsilon\}) \times \Gamma \times Q \times \Gamma^*\). We individually refer to each element of \(\delta\) as an \emph{action} and use the notation
	\( (q,b,\gamma) \hookrightarrow (p,\xi) \) for an action \( (q,b,\gamma,p,\xi) \in \delta \).
\end{compactitem} 

An \emph{instantaneous descriptor} (ID) of a PDA \(P\) is a triple $(q, w,
\xi)$ where $q$ is the state of the PDA, $w$ is the input word left to
read, and $\xi$ is the stack content.  Given an \emph{input word}
\(w\), we define the \emph{initial ID} of \(P\) to be \( (q_0,w, \gamma_0) \) and denote it
\(\ID_s(w)\). Given an ID \(\ID=(q,w,\xi)\), define \( \state(\ID)\), \(\tape(\ID)\)
and \(\stack(\ID)\) to be \(q\), \(w\) and \(\xi\), respectively.

Given an action \( (q,b,\gamma) \hookrightarrow (p,\xi') \) and an ID \((q, bw, \gamma
\xi)\) of \(P\) define a \emph{move} to be $(q, bw, \gamma \xi) \vdash_P (p,
w, \xi' \xi)$.  We often omit the subscript \(P\) when it is clear from
the context.
A \emph{move sequence} of \(P\) is finite sequence $\ID_0, \ID_1, \ldots, \ID_m$
where \(m\geq 0\) of IDs such that \(\ID_i \vdash_P \ID_{i+1}\) for all \(i\).
We respectively call \(\ID_0\) and \(\ID_m\) the \emph{first} and \emph{last}
ID of the move sequence and write \(\ID_0 \vdash^*_P \ID_m \) to denote a
move sequence from \(\ID_0\) to \(\ID_m\) whose intermediate IDs are not important.
A \emph{quasi-run} $r$ of \(P\) is a move sequence \(\ID \vdash^*_P \ID'\) such
that \(\stack(\ID)\in\Gamma\) and \(\stack(\ID')=\varepsilon\).
A \emph{run} \(r\) of \(P\) on input \(w\in\Sigma^*\) is a quasi-run  \(\ID
\vdash^*_P \ID'\) where \(\ID=\ID_s(w)\) and \(\tape(\ID')= \varepsilon\).
Intuitively, a run is a quasi-run that starts from ID \(\ID_s(w)\) and reads
all of \(w\).
We say that a word $w \in \Sigma^{*}$ is accepted by \(P\) if there exists a
run on input \(w\). The \emph{language} of \(P\), denoted \(L(P)\), is the set
of words for which \(P\) has a run. Formally, \( L(P) = \{w \in \Sigma^{*} \mid
\ID_s(w) \vdash^{*}_{P} \ID \text{ and } \tape(\ID)=\stack(\ID)=\varepsilon
\}\).

A \emph{context-free grammar} (CFG or grammar for short) is a tuple $G = (V,
\Sigma, S, \pr)$ where \(V\) is a finite set of \emph{variables} (or
\emph{non-terminals}) including the \emph{start variable} \(S\); \(\Sigma\) is
an alphabet (or set of \emph{terminals}), $\pr \subseteq V \times (\Sigma \cup
V)^{*}$ is a finite set of \emph{rules}. We often write \(X \rightarrow w\) for
a rule \( (X,w) \in \pr\).  We define a \emph{step} as the binary relation
$\Rightarrow_G$ on $(V \cup \Sigma)^{*}$ given by \(u \Rightarrow_G v\) if there
exists a rule \(X \rightarrow w\) of \(G\), \( (u)_i = X \) and \( v =
(u)_{1}\ldots (u)_{i-1} w (u)_{i+1} \ldots (u)_{\len{u}}\). We call \(i\) as the \emph{position selected} by the step.
Define \(u \Rightarrow^{*}_G v\) if there exists a step sequence \(u_0 \Rightarrow_G u_1 \Rightarrow_G \ldots \Rightarrow_G u_n\) 
such that \(u_0 = u\) and \(u_n = v\). A step sequence \(u \Rightarrow_G^* w\) is called a \emph{derivation} whenever
\(u = S\) and \(w\in \Sigma^*\). A step sequence \(u_0 \Rightarrow_G u_1 \Rightarrow_G \ldots \Rightarrow_G u_n\) is said to be \emph{leftmost}
if for each step \(u_i \Rightarrow_G u_{i+1}\), the position \(p_i\) selected is such that \( (u_i)_j \in V \) for no \(j < p_i\).
Define \(L(G) = \{ w \in \Sigma^* \mid S \Rightarrow^{*}_{G} w \} \) and call it the language generated by \(G\).

\noindent
Given a grammar \( (V,\Sigma,S,\pr) \) and \(Z\in V \cup \Sigma \cup \{\varepsilon\}\), define a quasi parse tree (or \emph{quasi-tree} for short), denoted \(t_Z\), to be a tree satisfying:
\begin{compactitem}
\item \(Z\) labels the root of \(t\); and
\item Each interior node is labelled by a variable; and
\item Each leaf is labelled by either a terminal $b \in \Sigma$ or $\varepsilon$. If the leaf is labelled $\varepsilon$, then it must be the only child of its parent (if any); and
\item If an interior node is labelled by $X$, and its \(k\) children are labelled \(X_1\) to \(X_k\), in that order, then $X \to X_1 X_2 \ldots X_k$ is a rule in \(R\).
\end{compactitem}
Next we define a \emph{parse tree} to be a quasi-tree with root \(S\)---the
start variable of \(G\).  Observe that all parse trees have at least two nodes
while quasi-trees have at least one. Also when the root of a quasi-tree is
labelled with \(a\in\Sigma\) or \(\varepsilon\) then it contains no other
nodes.
Given a quasi-tree \(t\) define its yield, denoted \(\yield(t)\), to be the word over
\(\Sigma\) obtained by concatenating the labels of the leaves of \(t\) from
left to right.

To each node \(n\) in a tree $t$ we assign a \emph{dimension} \(\dim(n)\) as follows:
\begin{compactitem}
\item If \(n\) is a leaf, then \(\dim(n) = 0\).
\item If \(n\) has children $n_1, n_2, \ldots, n_k$ with \(k \geq 1\) then
\[
\dim(n) = 
\begin{cases}
	\max_{i\in \{1,\ldots,k\}} \dim(n_i)  & \text{if there is a unique maximum}\\
  \max_{i\in \{1,\ldots,k\}} \dim(n_i)+1& \text{otherwise}
\end{cases}
\]
\end{compactitem}
We define the \emph{dimension of a tree \(t\)} with root $n$, denoted \(\dim(t)\), as \(\dim(n)\).

\begin{example}\label{ex:dyck}

	\hspace*{0pt}\\
	\noindent
	\begin{minipage}[b]{.6\textwidth}
\begin{tikzpicture}[thick,level/.style={sibling distance=20mm/#1},level distance=6mm]
    \node {\(S\)}
		    child {
					node {\(\bar{a}\)}
				} child {
           node {\(S\)}
           child { node {\(\varepsilon\)} }
        } child {
						node {\(a\)}
				} child {
						node {\(S\)}
						child {
							node {\(\bar{a}\)}
						} child {
							 node {\(S\)}
							 child {
									 node {\(\bar{a}\)}
							 } child {
									 node {\(S\)}
									 child { node {\(\varepsilon\)} }
							 } child {
							     node {\(a\)}
							 } child {
									 node {\(S\)}
							     child { node {\(\varepsilon\)} }
							 } 
						} child {
							node {\(a\)}
						} child {
                node {\(S\)}
							  child { node {\(\varepsilon\)} }
						} 
       } 
    ;
\end{tikzpicture}
	\end{minipage}%
	\begin{minipage}[b]{.4\textwidth}
	Let \( G_D = (\{S\}, \{\bar{a},a\}, S,\{ S \rightarrow \bar{a}\, S\, a\, S, S \rightarrow \varepsilon\}) \).
	We denote \(L(G_D)\) by \(L_D\), the \emph{Dyck language} over \( (\bar{a},a) \). 
	A parse tree \(t\) for the word \( \bar{a}\, a\, \bar{a}\, \bar{a}\, a\, a\) such that \(\dim(t)=1\) is given left.
	\end{minipage}%
\end{example}

\section{Oscillation For Trees: a Dyck Word Based Approach}
In this section, we match trees with Dyck words and define a measure based on a partial
ordering on Dyck words and special Dyck words we call harmonics. We start by
recalling that \(G_D\) is unambiguous.

\begin{proposition}\label{prop:unambiguousDyck}
The CFG \(G_D= (\{S\}, \{\bar{a},a\}, S,\{ S \rightarrow \bar{a}\, S\, a\, S, S \rightarrow \varepsilon\})\) is unambiguous.
\end{proposition}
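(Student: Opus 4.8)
The plan is to show that every word of \(L_D\) admits a unique parse tree (equivalently, a unique leftmost derivation), arguing by strong induction on the length of the word. The essential tool will be the standard height function: for a word \(v\) over \(\{\bar{a},a\}\), let \(h(v)\) denote the number of occurrences of \(\bar{a}\) in \(v\) minus the number of occurrences of \(a\). I would first record the classical characterization that \(w\in L_D\) if and only if \(h(w)=0\) and \(h(u)\geq 0\) for every prefix \(u\) of \(w\), since the whole argument hinges on it. The two rules \(S\rightarrow \bar{a}\, S\, a\, S\) and \(S\rightarrow\varepsilon\) then give only two possibilities for the root of any parse tree, and the task reduces to showing that at most one of them can be realized for a given yield.

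For the base case \(w=\varepsilon\), the rule \(S\rightarrow \bar{a}\, S\, a\, S\) cannot label the root, since any parse tree using it has a nonempty yield; hence the tree built from \(S\rightarrow\varepsilon\) is the only one. For the inductive step, assume \(w\neq\varepsilon\). The length-one prefix of \(w\) must have nonnegative height, which forces the first symbol of \(w\) to be \(\bar{a}\); consequently the root rule of any parse tree for \(w\) must be \(S\rightarrow \bar{a}\, S\, a\, S\), ruling out the other rule. It then remains to show that the factorization \(w=\bar{a}\, w_1\, a\, w_2\) with \(w_1,w_2\in L_D\) is unique: this fixes the yields of the two \(S\)-children of the root, after which the induction hypothesis applied to \(w_1\) and \(w_2\) (both strictly shorter than \(w\)) forces each subtree, and hence the whole tree, to be unique. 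To pin down the factorization I would locate the close bracket matching the leading \(\bar{a}\) as the unique smallest index \(j\geq 2\) with \(h\bigl((w)_1\cdots (w)_j\bigr)=0\); the height characterization then guarantees \((w)_j=a\), that \(w_1=(w)_2\cdots (w)_{j-1}\) and \(w_2=(w)_{j+1}\cdots (w)_{\len{w}}\) are both Dyck words, and that no other position could play the role of the matching \(a\).

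The step I expect to be the main obstacle is precisely this uniqueness of the factorization. One must verify both directions: that the prescribed \(j\) really produces valid Dyck subwords \(w_1,w_2\), and that no smaller or larger index admits a decomposition into two Dyck factors. This is where the prefix-nonnegativity half of the height characterization does the real work, since choosing an index before the first return to height \(0\) leaves \(w_1\) with a strictly negative prefix-height, while choosing a later index makes \(\bar{a}\,w_1\) dip to height \(0\) in its interior, contradicting balance. Once this is settled, the remainder is routine bookkeeping around the two rules of \(G_D\).
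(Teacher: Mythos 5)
Your proof is correct, but it takes a genuinely different route from the paper's. The paper argues by contradiction on leftmost derivations: it takes two distinct leftmost derivations of the same word, examines the first step at which they differ, and notes that at the selected (leftmost-variable) position one derivation must apply \(S \rightarrow \bar{a}\,S\,a\,S\) while the other applies \(S \rightarrow \varepsilon\); this freezes the terminal \(\bar{a}\) at that position in one derived word versus \(a\) (or end of word) in the other, so the two words cannot coincide. You instead proceed by strong induction on \(\len{w}\), using the prefix-height characterization of balanced words to show that the factorization \(w = \bar{a}\,w_1\,a\,w_2\) with \(w_1, w_2 \in L_D\) is unique, the distinguished \(a\) being forced to sit at the first return of the height to \(0\). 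Both arguments are sound, and each buys something: the paper's is shorter and needs no auxiliary characterization of \(L_D\), but it is terse --- its claim that \((u_i)_{p_i+1} = a\) silently relies on an invariant (in every sentential form of \(G_D\), each occurrence of \(S\) is followed by \(a\) or is the last symbol) which your approach never needs; your argument requires the classical lemma on nonnegative prefix heights, but is more structural, and the first-return-to-zero decomposition you establish is precisely the matching-pair structure the paper introduces immediately after this proposition, so your proof doubles as a construction of that structure. One streamlining remark: you do not actually need the converse half of the height characterization (balanced implies membership in \(L_D\)), nor the existence half of the factorization. In the uniqueness induction, \(w_1\) and \(w_2\) arise as yields of \(S\)-rooted subtrees of a given parse tree, hence lie in \(L_D\) automatically, and only the forward direction (words of \(L_D\) are balanced, an easy induction on derivations) does any work; likewise, the root rule \(S \rightarrow \varepsilon\) is excluded simply because \(w \neq \varepsilon\), so the first-symbol argument is not needed for that step.
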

Let \(w \in L_D\) and let \(t\) be its unique corresponding parse tree.
Unambiguity of \(G_D\) enables us to elegantly define matching pairs inside a word \(w \in L_D\).
Two positions \(i < j\) form a \emph{matching pair} \( (i,j)\) if \( (w)_i = \bar{a}
\), \( (w)_j = a \) and the two leaves corresponding to \( (w)_i \) and \( (w)_j \)
in \(t\) have the same parent.

\begin{example}
	Consider the parse tree \(t\) of Example~\ref{ex:dyck} and the word \( w=\bar{a}\, a\, \bar{a}\, \bar{a}\, a\, a\) it defines. The matching pairs
	of \(w\) are given by \(\{ (1,2), (3,6), (4,5) \}\). We prefer to use the more intuitive representation where the endpoints of the arrows
	are the matching pairs:
		\begin{dependency}[anchor=base, baseline=0, theme=simple, arc edge , arc angle = 15]
			\begin{deptext}[column sep=0.2cm]
				\(\bar{a}\) \& \(a\) \& \(\bar{a}\) \& \(\bar{a}\) \& \(a\) \& \(a\) \\
			\end{deptext}
			\depedge{1}{2}{}
			\depedge{3}{6}{}
			\depedge{4}{5}{}
		\end{dependency}
		\qed
\end{example}
Thus, we can determine following properties of matching pairs:
\begin{compactitem}
\item Arrows can only go forward: each matching pair \( (i,j) \) is such that \(i < j\). 
\item For each word \(w\in L_D\) and each position \(p\) in \(w\), if \( (w)_p = \bar{a} \) then there is exactly one arrow leaving from \(p\); else ( \((w)_p = a\) ) there is exactly one arrow ending in \(p\).
\item Arrows cannot cross: no two matching pairs \((i_1, j_1)\) and \((i_2, j_2)\) are such that
	\( i_1 < i_2 < j_1 < j_2\). Graphically, the following is forbidden:
		\begin{dependency}[anchor=base, baseline=0, theme=simple, arc edge , arc angle = 10]
			\begin{deptext}[column sep=0.2cm]
				\(\bar{a}\) \& \ldots \& \(\bar{a}\) \& \ldots \& \(a\) \& \ldots \& \(a\) \\
			\end{deptext}
			\depedge{1}{5}{}
			\depedge{3}{7}{}
		\end{dependency}
\end{compactitem}
Given two words \(w_a\) and \(w_b\) of \(L_D\), define the ordering \(w_a \preceq
w_b\) to hold whenever $w_a$ results from $w_b$ by deleting \(0\) or more
matching pairs.

\begin{example}
For $w_a = \bar{a}\; \bar{a} \;a \;a$ and $w_b =
\bar{a}\;\bar{a}\;\bar{a}\;a\;\bar{a}\;a\;\bar{a}\;a \;a \;a $; \(w_a,w_b\in L_D\) the ordering $w_a \preceq w_b$ holds
since $w_a$ results from deleting the three matching pairs
in $w_b$ depicted by thicker arrows:\\
	\begin{dependency}[theme=simple, arc edge , arc angle = 20]
		\begin{deptext}[column sep=0.2cm]
			\(w_b\) \& \(=\) \& \(\bar{a}\) \& \(\bar{a}\) \& \(\bar{a}\) \& \(a\) \& \(\bar{a}\) \& \(a\) \& \(\bar{a}\) \& \(a\) \& \(a\) \& \(a\) \\
			\end{deptext}
			\depedge{3}{12}{}
			\depedge[very thick]{4}{11}{}
			\depedge{5}{6}{}
			\depedge[very thick]{7}{8}{}
			\depedge[very thick]{9}{10}{}
		\end{dependency}
\qed
\end{example}

\begin{lemma}
  \((L_D, \preceq)\) is a partial order: a reflexive, transitive and anti-symmetric relation.
	\label{lem:poleq}
\end{lemma}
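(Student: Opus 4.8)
The plan is to verify the three defining properties separately, with reflexivity and anti-symmetry being immediate and transitivity carrying all the content. \emph{Reflexivity} holds because deleting $0$ matching pairs from $w\in L_D$ returns $w$, so $w\preceq w$. For \emph{anti-symmetry} I would argue by length: deleting a matching pair removes exactly two symbols, so $w_a\preceq w_b$ forces $\len{w_a}\le\len{w_b}$, with equality only when no pair is deleted, i.e.\ $w_a=w_b$. Hence $w_a\preceq w_b$ and $w_b\preceq w_a$ give $\len{w_a}=\len{w_b}$ and therefore $w_a=w_b$.

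\emph{Transitivity} is the main obstacle. Assume $w_a\preceq w_b$ and $w_b\preceq w_c$. The difficulty is that the pairs deleted from $w_b$ to reach $w_a$ are matching pairs \emph{of $w_b$}, whereas to exhibit $w_a\preceq w_c$ I must delete matching pairs \emph{of $w_c$}. The crux is the following preservation claim: if $w'$ results from $w\in L_D$ by deleting a set $P$ of its matching pairs, then $w'\in L_D$, and under the embedding of the surviving positions of $w$ into $w'$, every matching pair $(i,j)\notin P$ of $w$ maps to a matching pair of $w'$ (and, conversely, every matching pair of $w'$ arises this way). Since deleting a set of pairs amounts to removing a fixed set of positions, this claim also guarantees that the deletion process is well-defined independently of the order in which pairs are removed.

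To prove the claim I would work with the height function $h(p)$ defined as the number of \(\bar{a}\) minus the number of \(a\) among the first $p$ symbols, and use that $(i,j)$ is a matching pair exactly when $(w)_i=\bar{a}$, $(w)_j=a$, $h(j)=h(i-1)$, and $h(p)>h(i-1)$ for all $i\le p<j$ (this is equivalent, via the unambiguity of $G_D$ of Proposition~\ref{prop:unambiguousDyck}, to the two leaves sharing a parent). Deleting a single matching pair lowers the height of every strictly enclosed position by exactly $1$ and leaves all other heights unchanged. By the non-crossing property recalled above, each pair of $P$ is nested inside $(i,j)$, disjoint from it, or around it; in each case the four height conditions transfer verbatim to the images of $i$ and $j$, so the surviving pair still matches in $w'$. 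Iterating over the pairs of $P$ establishes the claim.

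To conclude, write $w_b$ as $w_c$ with a set $M_2$ of its matching pairs deleted, and $w_a$ as $w_b$ with a set $M_1$ of its matching pairs deleted. By the preservation claim each pair of $M_1$ lifts to a distinct matching pair of $w_c$, necessarily disjoint from $M_2$ since it is supported on positions surviving the $M_2$-deletion. Deleting $M_2$ together with this lifted copy of $M_1$ from $w_c$ reproduces exactly $w_a$, whence $w_a\preceq w_c$, as required.
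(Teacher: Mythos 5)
Your proof is correct, but it takes a genuinely different — and more careful — route than the paper. The paper's own proof is three lines: reflexivity and anti-symmetry are declared to "follow from the definition," and transitivity is argued by reading \(\preceq\) as generated by \emph{successive} single-pair deletions, so that the two deletion sequences simply concatenate (\(n_1+n_2\) deletions). Under that reading transitivity is immediate, but the paper never addresses the point you correctly isolate as the crux: the pairs one deletes to pass from \(w_b\) to \(w_a\) are matching pairs \emph{of \(w_b\)}, not of \(w_c\), so composing deletions (equivalently, reconciling the one-shot set-deletion picture used in the paper's own example with the successive-deletion picture used in its transitivity argument) silently relies on the fact that matching pairs surviving a deletion are still matching pairs of the resulting word, and conversely. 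Your preservation claim, proved via the height-function characterization of matching pairs together with the non-crossing property, is exactly the missing lemma; your length-parity argument for anti-symmetry likewise makes concrete what the paper calls "straightforward." What the paper's approach buys is brevity; what yours buys is a self-contained argument that additionally shows \(L_D\) is closed under pair deletion and that the result of deleting a set of pairs is independent of the order of deletion — facts the paper uses implicitly elsewhere. One small polish point: in the case of a deleted pair nested inside a surviving pair \((i,j)\), the height conditions do not transfer quite "verbatim" — for a position \(p\) covered by the deleted pair \((k,l)\) one needs the two-step estimate \(h(p)>h(k-1)>h(i-1)\), so that lowering \(h(p)\) by one still leaves it above \(h(i-1)\); and the "conversely" half of your claim deserves its one-line justification (surviving pairs of \(w\) already partition the positions of \(w'\), and every position of a Dyck word lies in exactly one matching pair).
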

\begin{definition}[harmonics and rank]
Define $(h_i)_{i\in\N}$, a sequence of words of $L_D$ given by:
\begin{align*}
	h_{0} &= \varepsilon & h_{i+1} &= \bar{a}\, h_{i}\, a \quad \bar{a}\, h_{i}\, a  \enspace , \enspace \text{for }i \geq 0
\end{align*}
We call \(h_i\) the \(i\)-\emph{th order harmonic} and collectively refer to them as \emph{harmonics}. Letting \(\hat{h}_{i} = \bar{a}\, h_{i}\, a\) we obtain the following alternative definition of \((i{+}1)\)-\emph{st order harmonic}: \(h_{i+1} = \hat{h}_{i} \; \hat{h}_{i} \).
Given \(w \in L_D\), define its \emph{rank}, denoted as \(\rank(w)\), as
the greatest harmonic order embedded in \(w\), that is the greatest \(q\geq
0\) such that \(h_q \preceq w\). Note that the rank is well-defined because \(h_0 = \varepsilon\) and \( \varepsilon \preceq w\) for all \(w\in L_D\).
\end{definition}

From now on, unless stated otherwise we assume grammars to be in Chomsky
normal form. A grammar \(G = (V,\Sigma,S,\pr) \) is in \emph{Chomsky normal form} if
each production rule \(p\) of \(R\) is such that \( p = X \rightarrow Y\, Z \)
or \( p = X \rightarrow b \) where \(X,Y\) and \(Z\) are variables and \(b\) is
a terminal.

A parse tree of a grammar in Chomsky normal form has the following property: all interior nodes have one or two children where
the nodes with one child correspond to a rule of the form \(X \rightarrow b\) and every other interior node has two children such
that the three nodes correspond to a rule of the form \(X \rightarrow Y\, Z\).

Next, we give a mapping of quasi-trees onto Dyck words based on the pre-order
traversal of a tree. Given a quasi-tree \(t\), we define its \textbf{\emph{footprint}}, denoted \(\alpha(t)\), inductively as follows:
\begin{compactitem}
\item If \(n\) is a leaf then $\alpha(n)= a$.\vspace{-1em}
\item If \(n\) has \(k\) children \(n_1\) to \(n_k\) (in that order) then 
	\( \alpha(n)= a\, \overbrace{\bar{a}\, \bar{a}\, \ldots \, \bar{a}}^{k \text{ times}} \alpha(n_1)\, \alpha(n_2)\, \ldots \, \alpha(n_k) \).
\end{compactitem}
Finally, if, in addition, \(n\) is the root of \(t\) then we define
\(\alpha(t)= \bar{a}\, \alpha(n) \). Our definition was inspired by a
particular formulation of the Chomsky-Sch\"{u}tzenberger theorem
\cite{phd/de/Wich2005}. 

We need the following notation to define and prove properties of the footprint.
Given a word \(w\in \Sigma^*\), define \(w_{\ll} = (w)_{2} \ldots
(w)_{\len{w}}\) which intuitively corresponds to shifting left \(w\). For
instance the following equalities hold \(a_{\ll} = \varepsilon\), \(abc_{\ll} =
bc\) and \( w = (w)_1 \; (w)_{\ll}\) for all words \(w\).

From the definition of $\alpha(t)$ it is easy to establish the following properties:
\begin{lemma}
Let \(t\) be a quasi-tree.
\begin{compactenum}
\item For every node \(n\) of \(t\), we have \( (\alpha(n))_{\ll} \in L_D \). In particular, when \(n\) has \(k\) children \(n_1\) to \(n_k\) we have \begin{dependency}[anchor=base, baseline=0, theme=simple, arc edge , arc angle = 10]
		\begin{deptext}
			\( (\alpha(n))_{\ll}\) \& \(=\) \& \(\bar{a}\) \& \ldots \& \(\bar{a}\) \& \( (\alpha(n_1))_1 \) \& \( (\alpha(n_1))_{\ll} \) \&\(\ldots\) \& \( (\alpha(n_k))_1 \) \& \( (\alpha(n_k))_{\ll} \)\\ 
			\end{deptext}
			\depedge{3}{9}{}
			\depedge{5}{6}{}
\end{dependency}.
Following the definition of the footprint, for \(t\) rooted at \(n\) we have \( \alpha(t) = \bar{a} \; a \;(\alpha(n))_{\ll}\), hence \( \alpha(t) \in L_D\).
\item Let \( t_1 \) be a subtree of \(t\): \(\alpha(t_1) \preceq \alpha(t)\), hence \(\rank(\alpha(t_1)) \leq \rank(\alpha(t))\).
\end{compactenum}
\label{lem:alphaprops}
\end{lemma}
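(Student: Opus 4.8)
The plan is to prove Lemma~\ref{lem:alphaprops} by structural induction on the quasi-tree, treating its two parts in sequence since part~(2) rests on the structural facts established in part~(1).

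For part~(1), I would argue by induction on the height of the node \(n\), establishing the stronger claim that \((\alpha(n))_{\ll} \in L_D\) for every node together with the explicit shape displayed in the statement. In the base case \(n\) is a leaf, so \(\alpha(n) = a\) and \((\alpha(n))_{\ll} = \varepsilon = h_0 \in L_D\). For the inductive step, suppose \(n\) has children \(n_1, \ldots, n_k\); then by definition \(\alpha(n) = a\, \bar a^{k}\, \alpha(n_1)\cdots\alpha(n_k)\), so \((\alpha(n))_{\ll} = \bar a^{k}\, \alpha(n_1)\cdots\alpha(n_k)\). Writing each \(\alpha(n_i) = (\alpha(n_i))_1\,(\alpha(n_i))_{\ll}\) and noting that \((\alpha(n_i))_1 = a\) by the two defining clauses (the first symbol of any footprint is \(a\)), I recover exactly the displayed word. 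The key check is that this word lies in \(L_D\): the leading block of \(k\) copies of \(\bar a\) must be matched against the \(k\) symbols \((\alpha(n_i))_1 = a\), one opening the subtree footprint of each child, while each \((\alpha(n_i))_{\ll}\) is a Dyck word by the induction hypothesis. Concretely, the matching pairs are \((\,\bar a\text{ in position } 3{+}(i{-}1)\,,\ a \text{ opening } \alpha(n_i)\,)\) nested outermost-to-innermost, with the inner Dyck words slotted in between; this is precisely the nesting structure that the displayed arrows indicate and that the grammar \(G_D\) generates. Finally, for \(t\) rooted at \(n\) we have \(\alpha(t) = \bar a\,\alpha(n) = \bar a\, a\, (\alpha(n))_{\ll}\), where \(\bar a\, a\) is a matched pair and \((\alpha(n))_{\ll}\in L_D\), so \(\alpha(t)\in L_D\).

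For part~(2), let \(t_1\) be a subtree of \(t\), rooted at some node \(m\). I would show \(\alpha(t_1) \preceq \alpha(t)\) by exhibiting \(\alpha(t_1)\) as the result of deleting a set of matching pairs from \(\alpha(t)\). The idea is that the footprint \(\alpha(t)\) of the whole tree contains, as a contiguous infix, the block \(\alpha(m)\) of symbols contributed by the subtree rooted at \(m\); everything appearing before and after this block, together with the single \(\bar a\) opening \(\alpha(m)\) against its matching closing symbol, forms matching pairs whose deletion leaves exactly the footprint \(\alpha(t_1) = \bar a\, a\, (\alpha(m))_{\ll}\). I would make this precise by an induction that peels off one level at a time: if \(t_1\) is a subtree of a child \(n_i\) of the root, apply the induction hypothesis to \(n_i\) and then observe that passing from a child's footprint to the parent's only inserts additional matching pairs (the new \(\bar a\) in the leading block and the symbols of the sibling subtrees), all of which can be deleted to descend. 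Once \(\alpha(t_1)\preceq\alpha(t)\) is established, the rank inequality \(\rank(\alpha(t_1)) \le \rank(\alpha(t))\) is immediate: if \(h_q \preceq \alpha(t_1)\) then \(h_q \preceq \alpha(t_1) \preceq \alpha(t)\) by transitivity of \(\preceq\) (Lemma~\ref{lem:poleq}), so the largest harmonic embeddable in \(\alpha(t_1)\) is embeddable in \(\alpha(t)\) as well.

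The main obstacle I anticipate is bookkeeping in part~(1): verifying that the arrows pair the leading \(\bar a\)-block correctly with the first symbols of the children's footprints, and checking the non-crossing condition so that the constructed word genuinely lies in \(L_D\) rather than merely having balanced symbol counts. The subtlety is that the \(k\) opening \(\bar a\)'s are consumed in last-in-first-out order against the children read left to right, which forces a specific nesting that must be confirmed against the non-crossing property of matching pairs. Part~(2) is then essentially a clean consequence once the infix structure of \(\alpha(t)\) is laid out, with the rank statement following formally from transitivity.
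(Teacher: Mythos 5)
Your overall strategy coincides with the paper's own proof: part~(1) by induction on the height of the node, with the base case \(\alpha(n)=a\), \((\alpha(n))_{\ll}=\varepsilon\in L_D\), and the inductive step using \((\alpha(n))_{\ll}=\bar{a}^{k}\,\alpha(n_1)\cdots\alpha(n_k)\) together with \((\alpha(n_i))_1=a\) and the induction hypothesis \((\alpha(n_i))_{\ll}\in L_D\); part~(2) by exhibiting \(\alpha(t_1)\) as the result of deleting matching pairs from \(\alpha(t)\) and then invoking transitivity of \(\preceq\) for the rank inequality. The paper states both steps more tersely than you do (its part~(2) is a one-liner), so your write-up is essentially an expanded version of the same argument, and your ``peel off one level at a time'' induction for part~(2) is a legitimate way to make the paper's terse claim precise.

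There is, however, one concrete error you must fix in part~(1): your explicit pairing formula --- the \(\bar{a}\) in position \(3+(i-1)\) matched with the \(a\) opening \(\alpha(n_i)\) --- pairs the \(i\)-th \(\bar{a}\) with the \(i\)-th child, and for \(k\geq 2\) these pairs \emph{cross} (first \(\bar{a}\) \(<\) second \(\bar{a}\) \(<\) first child's \(a\) \(<\) second child's \(a\)), violating the non-crossing property and contradicting the arrows displayed in the lemma statement, which pair the \emph{first} \(\bar{a}\) with \((\alpha(n_k))_1\) and the \emph{last} \(\bar{a}\) with \((\alpha(n_1))_1\). The correct pairing matches the \(i\)-th \(\bar{a}\) with the \(a\) opening \(\alpha(n_{k+1-i})\); this is exactly the LIFO consumption you correctly describe in your closing paragraph, so your own text contains the fix --- your formula just contradicts it. (Alternatively, you can avoid pair bookkeeping entirely, as the paper implicitly does, by observing that \(\bar{a}^{k}\,a\,w_1\,a\,w_2\cdots a\,w_k\) with all \(w_i\in L_D\) is generated by \(G_D\) via \(S\Rightarrow\bar{a}\,S\,a\,S\), peeling off the outermost \(\bar{a}\) against the \(a\) preceding \(w_k\) and recursing on \(k\).)
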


Since Lemma~\ref{lem:alphaprops} shows that the footprint \(\alpha(t)\) belongs to \(L_D\), we can define \textit{the rank of the footprint} of the tree. We call this rank the \textbf{\textit{oscillation}} of the tree: \(\osc(t) = \rank(\alpha(t))\).

Using harmonics we can also formulate an equivalent, alternative definition of
dimension. For space reasons, that definition is given in the appendix.

\section{Relating Dimension and Oscillation on Trees}
In this section, we establish the following relationship between the dimension and the oscillation of a tree.
\begin{theorem}\label{th:main}
Let a grammar \(G  = (V, \Sigma, S, \pr)\) be in Chomsky normal form and let \(t\) be a parse tree of \(G\). 
We have that %
\( \osc(t) -1 \leq \dim(t) \leq 2 \osc(t)\).
\end{theorem}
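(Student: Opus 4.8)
The plan is to prove the two inequalities $\osc(t)\le \dim(t)+1$ and $\dim(t)\le 2\,\osc(t)$ separately, in both cases by structural induction on the parse tree $t$: the binary rule $X\to Y\,Z$ is the inductive step and the rules $X\to b$ together with the terminal leaves are the base cases. The single reusable tool I would isolate first is a recursive, combinatorial reading of the embedding relation $\preceq$ on harmonics. Starting from $\hat{h}_i=\bar{a}\,h_i\,a$ and $h_{i+1}=\hat{h}_i\,\hat{h}_i$, I would establish: (i) $\hat{h}_q\preceq w$ holds iff $w$ contains a single matching pair whose enclosed content $v$ satisfies $h_q\preceq v$; and (ii) $h_q\preceq w$ holds iff $w$ contains two \emph{sequential} (disjoint, hence neither nested nor crossing) matching pairs, each enclosing content $\succeq h_{q-1}$. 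Both follow from the observation that deleting a matching pair peels off its outer brackets while preserving the enclosed factor, so any chosen pair can be isolated and any two disjoint pairs can be brought side by side. This turns $\rank$, and hence $\osc$, into a Horton--Strahler-type quantity computed on the forest of matching pairs of $\alpha(t)$.

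Next I would set up the \emph{footprint dictionary}. Unfolding the inductive definition of $\alpha$ on a binary node with left and right subtrees $t_1,t_2$ gives $\alpha(t)=\bar{a}\,a\,\bar{a}\,\bar{a}\,\alpha(n_1)\,\alpha(n_2)$, and from this I would locate where $\alpha(t_1)$, $\alpha(t_2)$ and the freshly created pairs sit inside $\alpha(t)$. The facts to extract are that matching pairs of $\alpha(t)$ correspond to the edges of $t$, that two sibling edges are \emph{nested} rather than sequential, and that the pair introduced for the right child encloses exactly $\alpha(t_1)$ (the left subtree's footprint), whereas the internal structure of $t_2$ re-emerges as sequential material at the top level. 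From this I would read off how the matching-pair forest of $\alpha(t)$ is assembled from those of $\alpha(t_1)$ and $\alpha(t_2)$, and then how $\osc(t)$ relates to $\osc(t_1),\osc(t_2)$ through characterisations (i)--(ii).

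With the dictionary in place both inequalities reduce to comparing two Horton--Strahler numbers: $\dim(t)$ is the Strahler number of $t$ itself (this is exactly the given recursive definition), while $\osc(t)$ equals the Strahler number of the matching-pair forest of $\alpha(t)$ up to one additive unit, because $h_q$ demands two sequential copies of $\hat{h}_{q-1}$ whereas a single deepest branch already forces $\hat{h}_{q-1}$. The main obstacle is that $\osc(t)$ is \emph{not} a function of $\osc(t_1),\osc(t_2)$ alone: the node over two ``stick'' subtrees ($X\to b$ subtrees) and the node over two height-one subtrees can have children of equal oscillation yet differ in whether the parent gains a unit. The footprint's twist---each subtree's content partly escaping the pair that introduces it---is precisely what causes this, and it is the source both of the $\pm1$ slack and of the factor $2$. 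I would therefore strengthen the induction hypothesis to carry, besides $\osc(t)$, the plain Strahler number of the matching-pair forest (equivalently, whether the maximal harmonic occurs once or sequentially twice in $\alpha(t)$), and propagate the two sharper recursive inequalities $\dim(t)\ge \mathrm{Str}\ge \osc(t)-1$ and $\dim(t)\le 2\,\osc(t)$ across the binary step.

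Finally, as correctness checks I would verify the extremal cases that fix the constants. The one-node production subtree (a unary node above a terminal) has $\dim=0$ but $\osc=1$, so $\osc\le\dim+1$ is tight; and the complete binary tree of height $q$ has $\dim=q$ while $\osc=\lfloor q/2\rfloor+1$, so the ratio $\dim/\osc\to 2$ and the factor $2$ in $\dim\le 2\,\osc$ is asymptotically necessary. Computing $\osc$ of complete binary trees directly via characterisation (ii) is itself a short, self-contained induction and is the natural place to debug the bookkeeping of the strengthened invariant before running the general argument.
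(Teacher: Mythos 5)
Your plan is sound and, fleshed out, would prove the theorem, but it takes a genuinely different route from the paper's. The paper inducts on \emph{dimension}, not on the tree structure: for \(\osc(t)\le\dim(t)+1\) it picks a node \(n\) of maximal dimension with two children of dimension \(d\), bounds the oscillation of the subtree rooted at \(n\) directly, and then propagates that bound up to the root by an inner induction on the depth of \(n\); for \(\dim(t)\le 2\,\osc(t)\) it uses a 2-step induction, descending to grandchildren of equal dimension so that the matching pair enclosing a left grandchild's footprint supplies the hatted embedding \(\bar{a}\,h_{\lceil \sfrac{d}{2}\rceil}\,a\preceq(\alpha(n_i))_{\ll}\), whence \(h_{\lceil \sfrac{d}{2}\rceil+1}\preceq\alpha(n)\), concluding by subtree monotonicity (Lemma~\ref{lem:alphaprops}). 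You avoid both the dimension induction and the two-level descent by strengthening a structural induction hypothesis with the hatted rank, supported by your recursive characterizations (i)--(ii) of \(h_q\)- and \(\hat{h}_q\)-embedding; those characterizations are correct (deleting matching pairs preserves the order and nesting of the surviving pairs), and they are exactly what the paper leaves implicit when it asserts non-embeddings such as \(h_{d+3}\npreceq\bar{a}\,\alpha(n)\). In fact, your binary-node recursion for the pair (rank, hatted rank) is the tree analogue of the paper's Lemma~\ref{lem:osc_disassem} on disassembled quasi-runs, and the hat bit you carry is precisely the \(\hat{\gamma}^{(d)}\) annotation the paper introduces only later to build \(P^{(k)}\); so your route costs more setup but makes every non-embedding step rigorous and reuses machinery the paper needs anyway. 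Two bookkeeping cautions. First, content-free pairs can realize \(h_1\) (two adjacent empty pairs suffice) but no higher harmonic, so the low-order cases of your recursion need explicit treatment; relatedly, the matching pairs of \(\alpha(t)\) are the edges of \(t\) \emph{plus} one extra pair above the root. Second, your tightness check is off by one under the paper's convention: for the bare perfect binary tree the paper shows \(\osc(\Pi_{2h-1})=\osc(\Pi_{2h})=h\), so your formula \(\lfloor \sfrac{q}{2}\rfloor+1\) is correct only if you count the unary terminal productions of the CNF parse tree as an extra level; either way the ratio \(\dim/\osc\) tends to \(2\), but the paper's convention gives \(\dim=2\,\osc\) exactly at even heights, which is the sharper witness.
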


\begin{proof}
The proof of both inequalities is an induction on the dimension of \( t \).

First, we prove if \( \dim (t) = d \) , then \( \osc (t) \leq d + 1\).

\subparagraph*{Basis.}  Let \( \dim (t) = 0 \). Being in Chomsky normal form, the grammar \(G\) generates only one tree \(t\) such that \(\dim(t)=0\):
it consists of two nodes, the root is labelled with the start variable and the leaf with some \(b\in\Sigma\) following a rule \(S \rightarrow b\).
The footprint of \(t\) is given by 
	\begin{dependency}[anchor=base, baseline=0, theme=simple, arc edge , arc angle = 10]
		\begin{deptext}[column sep=0.2cm]
			\(\alpha(t)\) \& \(=\) \& \(\bar{a}\) \& \(a\) \& \(\bar{a}\) \& \( a \) \\ 
			\end{deptext}
			\depedge{3}{4}{}
			\depedge{5}{6}{}
		\end{dependency} %
from which we see that \( \osc(t) = 1\). Therefore, the inequality holds for the base case.

\noindent
\textbf{Induction.}  Let \( \dim (t) = d +  1\) and call \(n_{\varepsilon}\) the root node of \(t\). Since the dimension of \(t\) is \( d + 1\), the definitions of dimension and Chomsky normal form show that there is a node \(n\) of \(t\) also of dimension \( d + 1 \) that has two children \(n_1\) and \(n_2\) of dimension \( d \).
We first show the oscillation of the tree \(t_n\) rooted at \(n\) is bounded by \(d+2\).
We know that \(\osc(t_n) = \rank( \bar{a}\, \alpha(n) )\), hence 
\(\osc(t_n) = \rank( \bar{a}\, a\, \bar{a}\, \bar{a}\, (\alpha(n_1))_1\, (\alpha(n_1))_{\ll}\; (\alpha(n_2))_1\, (\alpha(n_2))_{\ll} )\).
Moreover, it follows from the induction hypothesis that \(h_{d+2} \npreceq (\alpha(n_i))_{\ll}\) for \(i=1,2\). 
Therefore, since\\
\begin{dependency}[anchor=base, baseline=0, theme=simple, arc edge , arc angle = 10] 
\begin{deptext}
			\(\alpha(n)\) \& \(=\) \& \(a\) \& \(\bar{a}\) \& \(\bar{a}\) \& \( (\alpha(n_1))_1 \) \& \( (\alpha(n_1))_{\ll} \) \& \( (\alpha(n_2))_1 \) \& \( (\alpha(n_2))_{\ll} \)\\
			\end{deptext}
			\depedge{4}{8}{}
			\depedge{5}{6}{}
\end{dependency} following Lemma~\ref{lem:alphaprops}, we find that %
\( h_{d+3} \npreceq \bar{a}\, \alpha(n) \), hence that \( \rank(\bar{a}\, \alpha(n)) \leq d+2 \) and finally that \(\osc(t_n) \leq d+2\).

\subparagraph*{Basis.} In base case, the node \(n\) is the root of \(t\) and we are done.

\noindent
\textbf{Induction.} Now let us assume that the depth of node \(n\) is \(h\). Since \(t\) is a tree there is a unique path from \(n\) to the root
of \(t\) following the parent. The parent \(m\) of \(n\) is such that \(\dim(m)=d+1\) since \(\dim(t)=d+1\) and \(dim(n)=d+1\). Moreover, since \(t\) is the parse
tree of a grammar in Chomsky normal form we have that \(m\) has two children: \(n\) and a sibling we call \(n'\). It follows from
the definition of dimension that \(\dim(n')<d+1\). Thus we find that
\begin{dependency}[anchor=base, baseline=0, theme=simple, arc edge , arc angle = 10]
		\begin{deptext}
			\(\alpha(m)\) \& \(=\) \& \(a\) \& \(\bar{a}\) \& \(\bar{a}\) \& \( (\alpha(n))_1 \) \& \( (\alpha(n))_{\ll} \) \& \( (\alpha(n'))_1 \) \& \( (\alpha(n'))_{\ll} \)\\
			\end{deptext}
			\depedge{4}{8}{}
			\depedge{5}{6}{}
\end{dependency} or with \(n\) and \(n'\) in inverted order. 
By induction hypothesis, we have that \( h_{d+3} \npreceq (\alpha(n))_{\ll}\) and \( h_{d+2} \npreceq (\alpha(n'))_{\ll}\), hence we conclude that
\( h_{d+3} \npreceq (\alpha(m))_{\ll}\) and finally that \(\rank( (\alpha(m))_{\ll} ) \leq d+2 \). Since \(m\) is at depth \(h-1\) we can apply the
induction hypothesis to conclude that \(\osc(t) \leq d+2\). The other case with \(n\) and \(n'\) inverted is treated similarly.

To complete the proof of the theorem, we prove: if \( \dim (t) = d \), then \( \osc(t) \geq \lceil \sfrac{d}{2} \rceil \). This part of the proof is done by the induction on dimension of the parse tree, using the 2-induction principle.
\subparagraph*{Basis.}  In base case we show the inequality holds for dimensions \( 0 \) and \( 1 \).
Let \( \dim (t) = 0 \). In that case the grammar \( G \), being in Chomsky normal form, generates only one possible parse tree: it consists of two nodes, the root is labelled with the start variable S and the leaf with some \(b\in\Sigma\) following a rule \(S \rightarrow b\).
This parse tree has the footprint \( \alpha(t) = \bar{a} \;a \; \bar{a}\; a \) and \( \osc(t) = 1 \), what satisfies the inequality we want to prove.
Now let \( \dim(t) = 1 \). The parse tree of dimension \(1\) that we can construct with the minimal possible number of nodes is the following: the root \(n_{\varepsilon}\) is labelled with the start variable \(S\), \(S\) has two children \(n_1\) and \(n_2\) following the rule \( S \rightarrow BC\), and \(B\) and \(C\) have one child each following the rules \( B \rightarrow b\) and \( C \rightarrow c\) for some \( b, c \in \Sigma\).
The footprint of this tree is \( \alpha(t) = \bar{a}\; \overbrace{a\; \bar{a}\; \bar{a}\; \underbrace{a\; \bar{a}\; a}_{\alpha(n_1)}\; \underbrace{a\; \bar{a}\; a}_{\alpha(n_2)}}^{\alpha(n_{\varepsilon})}\) and \( \osc(t) = 1 \geq \sfrac{1}{2} \). Since any parse tree of \(G\) of dimension 1 will have the tree with this structure as its subtree and from the fact that \(G\) is in Chomsky normal form, from Lemma~\ref{lem:alphaprops} it follows that the oscillation of the parse trees of dimension 1 will be at least 1, and therefore always greater than \( \sfrac{1}{2} \). Hence, the inequality \( \dim(t) \leq 2 \osc(t)\) holds in the base case.

\noindent
\textbf{Induction.}
Let \( \dim (t) = d + 2 \), and assume the right inequality of the theorem is true for the trees of dimension \( d \) and \( d + 1\). If the dimension of the tree is \( d + 2\), then from the definition of dimension it 
follows there is a node \( n\) in \(t\) with dimension \( d + 2 \) that has two children \( n_1 \) and \(n_2 \) of dimension \( d + 1\), and each one of those nodes has two successors with dimension \( d \) that are also siblings. Set \(n_{11}\) and \(n_{12}\) to be those successors of the node \(n_1\), and \(n_{21}\) and \(n_{22}\) the successors of the node \(n_2\). 
We thus find that: 
	\begin{dependency}[anchor=base, baseline=0, theme=simple, arc edge , arc angle = 10]
		\begin{deptext}[column sep=0.2cm]
			\(\alpha(n)\) \& \(=\) \& \( a \) \& \(\bar{a}\) \& \(\bar{a}\) \& \(a\) \& \( (\alpha(n_1))_{\ll} \) \& \(a\) \& \( (\alpha(n_2))_{\ll} \) \\ 
			\end{deptext}
			\depedge{5}{6}{}
			\depedge{4}{8}{}
	\end{dependency}. %
It also holds 
	\begin{dependency}[anchor=base, baseline=0, theme=simple, arc edge , arc angle = 10]
		\begin{deptext}[column sep=0.2cm]
			\(\bar{a}\) \& \(\bar{a}\) \& \(a\) \& \( (\alpha(n_{i1}))_{\ll} \) \& \(a\) \& \( (\alpha(n_{i2}))_{\ll} \) \& \( \preceq \) \& \((\alpha(n_i))_{\ll} \)\\ 
			\end{deptext}
			\depedge{2}{3}{}
			\depedge{1}{5}{}
	\end{dependency} %
for \(i=1,2\).
We show that the oscillation of the tree \(t_n\) rooted at the node \(n\) is bounded from below by \( \frac{d+2}{2} \). From induction hypothesis it follows that \(h_{\lceil \sfrac{d}{2} \rceil} \preceq \bar{a} \; a\; (\alpha(n_{ij}))_{\ll}\), for \( i,j \in \{1,2\} \). Thus, due to transitivity of \((L_D, \preceq\)) we find \(\bar{a} \;h_{\lceil \sfrac{d}{2} \rceil} \; a \preceq (\alpha(n_{i}))_{\ll}\), for \(i=1,2\). Hence it follows that \(h_{\lceil \sfrac{d}{2} \rceil + 1} \preceq \alpha(n)\), and for the tree \(t_n\) it holds that \(\osc(t_n) \geq \lceil \sfrac{d}{2} \rceil + 1\). Since \(t_n\) is a subtree of \(t\), it follows that \( \osc(t) \geq \osc(t_n)\) from Lemma~\ref{lem:alphaprops}, and thus \(\dim(t) \leq 2 \osc(t)\).
\end{proof}

These bounds are tight up to \( \pm 1\). For the right inequality, define \(\Pi_h\) to be the perfect binary tree of height \(h\). We have \(\dim(\Pi_h)=h\) for all \(h\). 
However, by induction on \(h\) we find that \(\osc(\Pi_{2h-1})=\osc(\Pi_{2h})=h\) for all \(h\), hence \(\osc(\Pi_{2h-1})=\osc(\Pi_{2h})=\dim(\Pi_h)\). Therefore, the upper bound is off by 1 for perfect binary trees of odd height.

\noindent
\begin{minipage}[t]{.67\textwidth}
For the lower bound we consider the following structure of trees. We define \(P_0\) as the tree consisting of a root and one child. 
The tree \(P_n\) is defined inductively as depicted on the right.
We call \(n_{1}\) and \(n_{21}\) the roots of the first and second \(P_{n-1}\) subtree, respectively.
It is easy to see by induction on \(n\) that the dimension of the tree \(P_n\) is \(n\).  
\end{minipage}%
\hfill
\begin{minipage}[t]{.3\textwidth}
	\vspace{0pt}
\begin{tikzpicture}[thick,level/.style={sibling distance=20mm/#1},level distance=7mm]
\tikzstyle{level 2}=[sibling distance=20mm]
   \node[circle,draw](root){\(n_{\varepsilon}\)}
		child {
		node[ellipse,draw,red] {\(P_{n-1}\)}
		}
		child { 
		node[circle,draw] {\(n_{2}\)}
				child {
				node[ellipse,draw,red] {\(P_{n-1}\)}
				}
				child {
				node[circle,draw] {\(n_{22}\)}
				}
	};
\end{tikzpicture}
\end{minipage}

\noindent
We show, also by induction on \(n\), that the oscillation of the tree \(P_n\) grows with dimension. After constructing the footprint of \(P_0\), we have \(\alpha(P_0) = \bar{a} \; a\; \bar{a} \;a\), hence that \(\osc(P_0) = 1\).
For \(P_n\) we have \begin{dependency}[anchor=base, baseline=0, theme=simple, arc edge , arc angle = 10]
\begin{deptext}[column sep=0.2cm]
	\(\alpha(P_n)\) \& \(=\) \& \(\bar{a}\) \& \( a \) \& \(\bar{a}\) \& \(\bar{a}\) \& \(a\) \& \((\alpha(n_1))_{\ll}\) \& \(a\) \& \(\bar{a}\) \& \(\bar{a}\) \& \(a\) \& \( (\alpha(n_{21})_{\ll}\) \& \(a\).\\ 
	\end{deptext}
	\depedge{3}{4}{}
	\depedge{5}{9}{}
	\depedge{6}{7}{}
	\depedge{11}{12}{}
	\depedge{10}{14}{}
\end{dependency} %
From there and from the inductive hypothesis, if follows that \( h_{n+1} \preceq \alpha(P_n) \) and \( h_{n+2} \npreceq \alpha(P_n) \), hence \( \dim(P_n) = \osc(P_n) - 1 = n\). 

\section{Oscillation: from Trees to Runs}

In what follows, we map a quasi-run of a PDA onto a word of \(L_D\).
Intuitively, the mapping associates, quite naturally, \(\bar{a}\) to each push action of the run
and \(a\) to each pop.  For instance, replacing the topmost stack symbol \(\gamma\) by \(\gamma\prime\), using an action \( (q,b,\gamma) \hookrightarrow (p,\gamma\prime) \) corresponds to \(a\, \bar{a}\),
pushing two symbols using an action \( (q,b,\gamma) \hookrightarrow
(p,\gamma\prime\gamma\prime\prime) \) corresponds to \(a\, \bar{a}\, \bar{a} \), etc.
 Given the LIFO policy of the stack, we see that a quasi-run is mapped
onto a word of \(L_D\), the shortest such word being \(\bar{a}\, a\).

Next, we provide the formal definition of the mapping.
We first start by observing that quasi-runs with more than one move can always be disassembled
into a first move and subsequent quasi-runs.
We need the following notation: Given two IDs \(\ID\) and \(\ID'\) such that \(\stack(\ID) = \xi\; \stack(\ID') \) holds for
some \(\xi\in\Gamma^*\) define \(\ID / \ID' = (\state(\ID), \tape(\ID), \xi ) \).
In what follows, we formalize the disassembly of quasi-runs. We assume PDA has only one state \(q\).
\begin{lemma}[Disassembly of quasi-runs]
	Let \(r = \ID_0, \ID_1, \ldots, \ID_m\) be a quasi-run with \(m>1\). Then we can \textbf{disassemble} \(r\) into its first move \(\ID_0 \vdash \ID_1\) and \(d\) quasi-runs \(r_1, \ldots, r_d\) where \(d=\len{\stack(\ID_1)}\) as follows:
\[r_1 = \ID_{p_0}/\ID_{p_1},\ldots,\ID_{p_1}/\ID_{p_1}, \ldots, r_i = \ID_{p_{i-1}}/\ID_{p_i},\ldots,\ID_{p_i}/\ID_{p_i}, \ldots, r_d =  \ID_{p_{d-1}}/\ID_{p_d},\ldots,\ID_{p_d}/\ID_{p_d},
\] where \(p_0=1\) and \(p_1,\ldots,p_d\) are defined to be the least positions such that \(\stack(\ID_{p_i}) = \stack(\ID_{p_{i-1}})_{\ll} \) for all \(i\). Necessarily, \(p_d=m\) and each quasi-run \(r_i\) starts with \( (\stack(\ID_1))_i\) as
its initial stack content.
\label{lem:disassembly}
\end{lemma}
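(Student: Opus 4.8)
The plan is to isolate the single structural fact that makes the decomposition work: the stack is a LIFO store, so once the first move has produced $\stack(\ID_1)=\gamma_1\gamma_2\cdots\gamma_d$ (top symbol on the left), the bottom block $\gamma_{i+1}\cdots\gamma_d$ can neither be read nor modified until every symbol above it has been popped. Concretely, I would first prove a \emph{suffix-preservation} claim: if along a move sequence some stack has the form $\zeta\,\xi$ with $\zeta\neq\varepsilon$, then a single move acts only on $\zeta$, so $\xi$ survives as a suffix of the next stack. This is immediate from the shape $(q,bw,c\,\eta)\vdash(p,w,\xi'\eta)$ of a move: writing the stack as $\zeta\xi=c\,\zeta'\xi$ with $c$ its top symbol and taking $\eta=\zeta'\xi$, the resulting stack $\xi'\zeta'\xi$ still ends in $\xi$. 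Iterating, $\xi$ is a suffix of every stack up to the first index $p$ with $\len{\stack(\ID_p)}\le\len{\xi}$, and there $\stack(\ID_p)=\xi$ exactly, since a move lowers the height by at most one and hence the height cannot undershoot $\len{\xi}$ before hitting it.

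Then I would establish the existence of the positions $p_i$ together with $\stack(\ID_{p_i})=\gamma_{i+1}\cdots\gamma_d$ by induction on $i$ (equivalently, on the length $d-i$ of the remaining bottom block). For the step I apply suffix-preservation to the tail $\ID_{p_{i-1}}\vdash^*\ID_m$ with $\xi=\stack(\ID_{p_{i-1}})_{\ll}=\gamma_{i+1}\cdots\gamma_d$: the stack height runs from $d-i+1$ down to $0$, decreasing by at most one per move, so it first reaches $\le d-i$ at a well-defined least index, which is exactly $p_i$, and there the stack equals $\xi$. Moreover every stack strictly before $p_i$ is strictly taller than $\xi$, so the global least index with content $\xi$ coincides with the least one after $p_{i-1}$; this yields $p_1<\cdots<p_d$. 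The identity $p_d=m$ is then forced: $\stack(\ID_{p_d})=\varepsilon$, and an empty stack can occur only at the last ID of the quasi-run because no action is enabled on an empty stack.

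Finally I would check that each stripped segment $r_i=\ID_{p_{i-1}}/\ID_{p_i},\ldots,\ID_{p_i}/\ID_{p_i}$ is a genuine quasi-run beginning with $(\stack(\ID_1))_i$. Well-definedness of each $\ID_j/\ID_{p_i}$ is exactly suffix-preservation, since $\xi=\gamma_{i+1}\cdots\gamma_d$ is a suffix of every $\stack(\ID_j)$ on the segment. That consecutive stripped IDs still form moves is the companion fact that a move commutes with deleting the fixed suffix $\xi$ from below: for $j<p_i$ the active top symbol lies in the nonempty part above $\xi$, so the same action applies and yields the same result after $\xi$ is removed, while the $\ID/\ID'$ operation leaves state and tape untouched. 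The first ID of $r_i$ then has stack $\gamma_i\gamma_{i+1}\cdots\gamma_d$ with $\gamma_{i+1}\cdots\gamma_d$ deleted, i.e.\ the single symbol $\gamma_i=(\stack(\ID_1))_i\in\Gamma$, and the last ID has stack $\varepsilon$, so $r_i$ is a quasi-run with the claimed initial stack content.

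The main obstacle is the suffix-preservation claim together with its companion ``moves commute with a fixed suffix below''; these two LIFO facts carry all the weight, after which existence of the $p_i$, the equality $p_d=m$, and the identification of the starting symbols are routine. The single-state hypothesis plays no essential role, since both $\ID/\ID'$ and every move treat the state uniformly.
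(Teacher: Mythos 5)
Your proposal is correct. Note that the paper itself gives no proof of this lemma---neither in the body nor in the appendix---treating the disassembly as immediate from the LIFO discipline of the stack and illustrating it only by the worked example that follows it (Example~\ref{ex:tbc}); your argument supplies exactly the formalization that the paper leaves implicit. In particular, the suffix-preservation claim (together with the observation that stack height drops by at most one per move, so the first stack of height at most \(\len{\xi}\) equals \(\xi\) exactly) gives existence and ordering of the positions \(p_i\); the fact that no move is enabled on an empty stack gives \(p_d=m\); and the commutation of moves with deletion of a fixed bottom suffix makes each stripped segment \(r_i\) a genuine quasi-run starting from \((\stack(\ID_1))_i\), which is all that the lemma asserts.
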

In the following example, we show what does the above formalized disassembly look like when the quasi-run starts with a move that pushes two symbols onto the stack.
\begin{figure}[h]
\centering
\begin{minipage}[b]{.45\textwidth}
	\includegraphics[width=\textwidth]{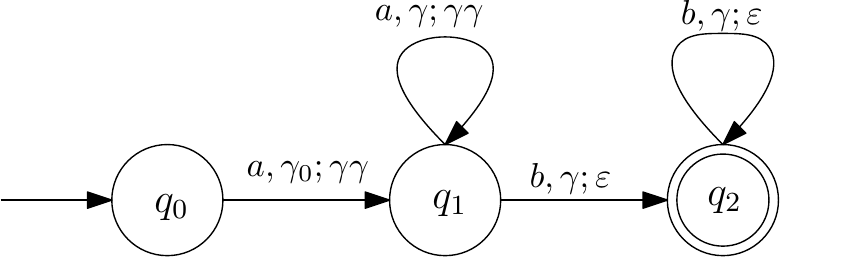}
	\caption{The automaton for Example~\ref{ex:tbc}}
	\label{fig:aut-decomp}
\end{minipage}
\begin{minipage}[b]{.45\textwidth}
	\includegraphics[width=\textwidth]{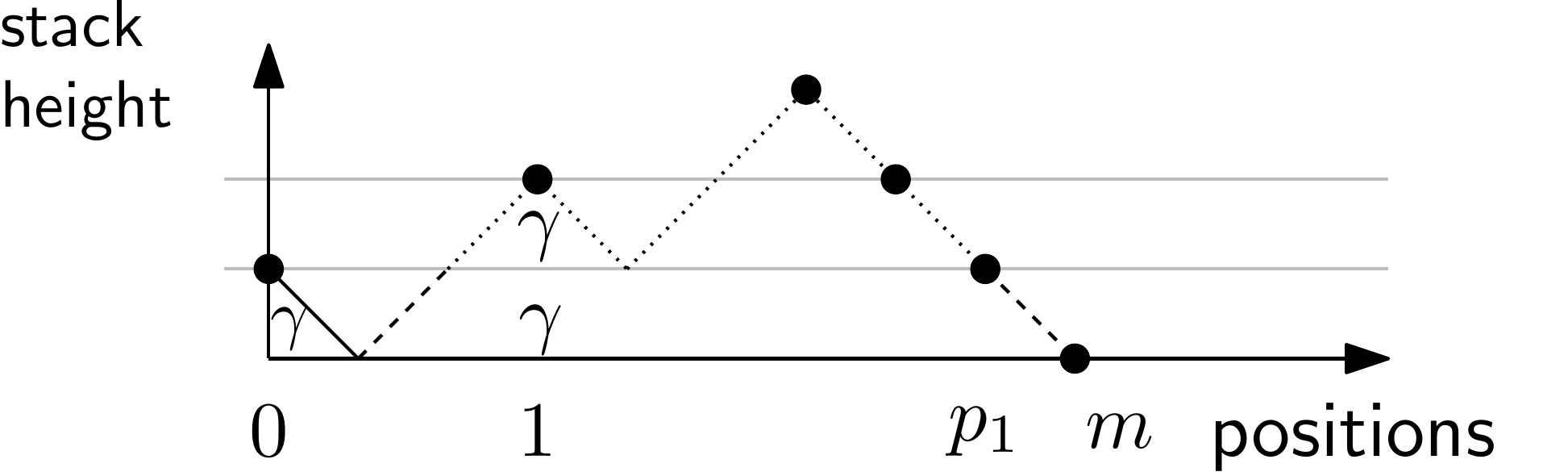}
	\caption{The quasi-run \(r\) for Example~\ref{ex:tbc}}
	\label{fig:quasi-run-decomp}
\end{minipage}
\end{figure}
\begin{example}\label{ex:tbc}
On Figure~\ref{fig:aut-decomp} we see a PDA accepting the language \( L = \{ a^n b^{n+1} \mid n \geq 1\} \). The set \(\delta\) of this PDA consists of four actions: \( (q_0, a, \gamma_0) \hookrightarrow (q_1, \gamma \gamma)\), \( (q_1, a, \gamma) \hookrightarrow (q_1, \gamma \gamma)\), \( (q_1, b, \gamma) \hookrightarrow (q_2, \varepsilon)\) and \( (q_2, b, \gamma) \hookrightarrow (q_2, \varepsilon)\), as shown on the arcs on Figure~\ref{fig:aut-decomp}. On Figure~\ref{fig:quasi-run-decomp} we see a quasi-run
\(r = \ID_0,\ldots,\ID_m\), with \(m = 5\), that accepts the word \(w = aabbb\). Each black disk
is associated with one ID in the quasi-run.
The fist move \(\ID_0 \vdash \ID_1\) yields two symbols (\(\gamma \gamma \)) on the stack. 
The first dotted line between \(1\) and \(p_1=4\) defines the quasi-run
\(r_1 = \ID_{1}/\ID_{p_1},\ldots,\ID_{p_1}/\ID_{p_1}\) while
the dashed line between \(p_1\) and \(m\) defines the quasi-run
\(r_2 = \ID_{p_1}/\ID_{m}, \ID_{m}/\ID_{m}\) that can be rewritten as
\(r_2 = \ID_{p_1}, \ID_{m}\) since \(\stack(\ID_m) = \varepsilon\).
\qed
\end{example}
Next, we define the footprint of quasi-runs based on the previous disassembly.
\begin{definition}
Given a quasi-run \(r=\ID_0,\ldots, \ID_m\) of a PDA \(P\) and its disassembly as in Lemma~\ref{lem:disassembly}, define \(\alpha'(r)\) as follows:
	\begin{compactitem}
	\item if \(m=1\) then \(\alpha'(r) = a\) \vspace{-1em}
	\item if \(m>1\) and \( \len{\stack(\ID_1)} = d \) with \(d>0\) then
		\( \alpha'(r) = a\; \overbrace{\bar{a}\ldots\bar{a}}^{d \text{ times}}\; \alpha'(r_1)\; \alpha'(r_2) \ldots \alpha'(r_d) 
		\).%
	\end{compactitem}
	Define the \emph{footprint of} \(r\), also denoted \(\alpha(r)\), as \(\alpha(r) = \bar{a}\; \alpha'(r)\).
	\label{def:quasi-run-foot}
\end{definition}
Going back to Example~\ref{ex:tbc}, definition~\ref{def:quasi-run-foot} applied
on \(r\) yields \( \alpha'(r) = a\; \bar{a}\; \bar{a}\; \alpha'(r_1)\; \alpha'(r_2) = \\ 
a\; \bar{a}\; \bar{a}\; a \; \bar{a} \; \bar{a} \; a \; a\; a\).
From now on, unless stated otherwise, to simplify the presentation, we assume the PDA \(P\) is in a \textbf{reduced form}. That is, \(P\) has only one state,
called \(q\), and each action of \(\delta\) has the following form \( (q, b,
\gamma) \hookrightarrow (q, \xi) \) where \(b \in \Sigma\cup\{\varepsilon\}\),
\(\gamma\in\Gamma\) and \(\xi \in (\Gamma^2 \cup \{\varepsilon\})\). Therefore, each action pops a symbol or pushes two symbols onto the stack.
\begin{lemma}
Let \(r\) be a quasi-run run of \(P\) in reduced form and let \(r_1, r_2\) be the disassembly of the quasi-run as in Lemma~\ref{lem:disassembly}. Then \( \osc(r) = k\) iff one of the following is satisfied:
\begin{itemize}
\item \( h_{k-1} \preceq \alpha(r_1)\) and \( \hat{h}_{k-1} \preceq \alpha(r_2)\) and \( h_k \npreceq \alpha(r_i), i=1,2\).
\item \( h_k \preceq \alpha(r_1)\) and \( h_{k+1} \npreceq \alpha(r_1)\) and \( \hat{h}_k \npreceq \alpha(r_2)\); or \( h_k \preceq \alpha(r_2)\) and \( \hat{h}_{k} \npreceq \alpha(r_2)\) and \( h_{k+1} \npreceq \alpha(r_1)\).
\end{itemize}
\label{lem:osc_disassem}
\end{lemma}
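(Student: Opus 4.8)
We have a quasi-run $r$ of a reduced-form PDA, disassembled into a first move followed by exactly two quasi-runs $r_1, r_2$ (since reduced form pushes exactly two symbols, $d=2$). By Definition~\ref{def:quasi-run-foot}, $\alpha(r) = \bar{a}\,\alpha'(r) = \bar{a}\,a\,\bar{a}\,\bar{a}\,\alpha'(r_1)\,\alpha'(r_2)$, and since $\alpha(r_i)=\bar{a}\,\alpha'(r_i)$, we can rewrite $\alpha(r)$ so that the two copies $\bar{a}\,\alpha'(r_i)$ appear as $\alpha(r_1),\alpha(r_2)$ nested inside. The claim characterizes $\osc(r)=\rank(\alpha(r))=k$ in terms of which harmonics embed (under $\preceq$) into $\alpha(r_1)$ and $\alpha(r_2)$.

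**The plan.** I would first establish the structural identity relating $\alpha(r)$ to $\alpha(r_1)$ and $\alpha(r_2)$ via the footprint definition, and exhibit the matching-pair structure graphically as the paper does elsewhere: the outer $\bar{a}\,a$ is a trivial matched pair, then a $\bar{a}$ opens a pair that closes only at the very end, enclosing the concatenation of the two subrun footprints. The key observation is that $\bar{a}\,\alpha'(r_i)=\alpha(r_i)$, and that $\hat{h}_{j}=\bar{a}\,h_j\,a$ is exactly the shape of "wrapping a subrun footprint in one open/close pair." So I would first prove two monotonicity-style sublemmas: (a) $h_{k} \preceq \alpha(r)$ iff the two halves $\alpha(r_1),\alpha(r_2)$ together supply a $\hat{h}_{k-1}$ each (giving $h_k=\hat{h}_{k-1}\hat{h}_{k-1}$), or one half already contains $h_k$ by itself; and (b) the complementary non-embedding facts. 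Both directions reduce to how matching pairs of $h_k$ (which cannot cross, by the no-crossing property) must distribute across the two halves.

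**Carrying it out.** Recall $h_k = \hat{h}_{k-1}\,\hat{h}_{k-1}$ and $\hat{h}_{k-1}=\bar{a}\,h_{k-1}\,a$. The crucial combinatorial step is: since arrows in a Dyck word cannot cross, when embedding $h_k$ into $\alpha(r)$ the outer pair of each $\hat{h}_{k-1}$ must either be absorbed by the single enclosing $\bar{a}\cdots a$ pair of $\alpha(r)$, or map entirely inside one of $\alpha(r_1),\alpha(r_2)$. This forces a case split exactly matching the two bullets. In the first bullet (the "balanced split"), one $\hat{h}_{k-1}$ embeds into each of $\alpha(r_1),\alpha(r_2)$ and neither carries a full $h_k$, so $\osc=k$ is achieved by combining the two halves through the outer pair. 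In the second bullet (the "unbalanced" case), one half already contains $h_k$ outright while the matching-pair budget of the other half is insufficient to push the combined rank to $k+1$; here I must verify that $h_{k+1}\npreceq\alpha(r)$, i.e. that you cannot get $\hat h_k$ on both sides. I would argue each direction (if $\osc(r)=k$ then one bullet holds; conversely each bullet forces $\osc(r)=k$) by this embedding analysis together with Lemma~\ref{lem:alphaprops}, using the fact that $\alpha(r_i)\preceq\alpha(r)$ so ranks of halves bound the rank of $r$ from below.

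**Main obstacle.** The delicate point is the non-embedding claims ($h_k\npreceq$ or $h_{k+1}\npreceq$): proving that a harmonic does \emph{not} embed requires a global argument, not a local one. I expect to need a clean lemma stating precisely how $\rank(\bar{a}\,u\,v)$ for $u,v\in L_D$ relates to $\rank(u)$ and $\rank(v)$ — essentially that wrapping and concatenating can raise the rank by at most one and only when both halves each supply the next-lower $\hat h$. Establishing this "rank-of-concatenation-under-a-wrap" arithmetic rigorously, and ruling out pathological embeddings where a single harmonic's matching pairs straddle the split point, is the genuinely hard part; everything else is bookkeeping on footprints of the form $\bar{a}\,a\,\bar{a}\,\bar{a}\,\alpha'(r_1)\,\alpha'(r_2)$.
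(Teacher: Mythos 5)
Your proposal is sound and would yield a complete proof, but it takes a genuinely different route from the paper's at the one point that matters. Both arguments share the same skeleton: decompose \(\alpha(r)=\bar{a}\,a\,\bar{a}\,\bar{a}\,\alpha'(r_1)\,\alpha'(r_2)\) via Definition~\ref{def:quasi-run-foot} and Lemma~\ref{lem:disassembly}, then case-split on whether the witnessing harmonic straddles the two sub-runs (first bullet) or sits wholly inside one of them (second bullet). The paper, however, proves the ``only if'' direction by induction on the length of the quasi-run: it lists the two configurations, checks each is consistent with \(\osc(r)=k\), and then merely asserts that ``these are the only possibilities.'' Your proposal replaces that assertion with a direct combinatorial argument on Dyck words: because matching pairs cannot cross, an embedding of \(h_k=\hat{h}_{k-1}\,\hat{h}_{k-1}\) into \(\alpha(r)\) must either route one \(\hat{h}_{k-1}\) through the pair enclosing the copy of \(\alpha(r_1)\) (forcing \(h_{k-1}\preceq\alpha(r_1)\) and \(\hat{h}_{k-1}\preceq\alpha(r_2)\)) or place \(h_k\) entirely inside one sub-run footprint. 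This ``rank-of-concatenation-under-a-wrap'' lemma that you flag as the hard part is exactly what the paper leaves implicit---even the paper's check that \(h_{k+1}\npreceq\alpha(r)\) ``since \(h_k\npreceq\alpha(r_i)\)'' tacitly relies on it---so your route buys rigor precisely where the published proof is thinnest, at the cost of stating and proving one distribution lemma, which your non-crossing analysis does establish. Two small corrections: in your ``carrying it out'' paragraph the balanced case needs only \(h_{k-1}\preceq\alpha(r_1)\), not \(\hat{h}_{k-1}\preceq\alpha(r_1)\); the hat on the \(r_1\) side is supplied by the enclosing pair of \(\alpha(r)\), which is exactly why the lemma's first bullet imposes the hat requirement on \(r_2\) alone (your ``plan'' paragraph has this right). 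Also, Lemma~\ref{lem:alphaprops} is stated for trees; for runs, the fact \(\alpha(r_i)\preceq\alpha(r)\) should be re-derived by deleting the appropriate matching pairs in the displayed decomposition, which is immediate.
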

To relate footprint of trees and runs, we define a transformation from the
device generating trees (CFG) to the device generating runs (PDA).
We thus define a transformation from a grammar to pushdown automaton such that
they accept the same language (for space reason the proof is given in
appendix but the transformation is quite standard).
\begin{definition}[CFG2PDA transformation]
	Let \(G = (V, \Sigma, S, \pr)\) be a context-free grammar. Define the PDA \(P = (\{q\}, \Sigma, \Gamma, \delta, q, \gamma_0)\) where \(\Gamma=V \cup \Sigma \cup \{ \bm{e} \} \) (\(\bm{e}\notin V\cup\Sigma\)), \(\gamma_0 = S\), and, moreover, the transition function \(\delta\) consists exactly of the following actions:
\begin{compactitem}
\item \( \delta \) contains \( (q, \varepsilon, X)\hookrightarrow(q, w) \), for each rule \( (X, w) \in \pr\), with \(w \neq \varepsilon\),
\item \( \delta \) contains \( (q, \varepsilon, X)\hookrightarrow(q, \bm{e}) \), for each rule \( (X, \varepsilon) \in \pr\),
\item \( \delta \) contains \( (q, b, b)\hookrightarrow(q, \varepsilon) \), for each terminal \( b \in \Sigma\),
\item \( \delta \) contains \( (q, \varepsilon, \bm{e})\hookrightarrow(q, \varepsilon) \), for \( \bm{e} \in \Gamma\).
\end{compactitem}
\label{def:cfg2pda}
\end{definition}

\begin{proposition}
Let \(G=(V, \Sigma, S, \pr)\) be a context-free grammar and \(P=(\{q\}, \Sigma, \Gamma, \delta, q, S)\) the PDA obtained through CFG2PDA transformation. Given a parse tree \(t\), there exists a run \(r\) on input \(\yield(t)\) such that \(\alpha(t) = \alpha(r)\).
\label{prop:footprint_eq}
\end{proposition}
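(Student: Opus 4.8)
The plan is to prove a slightly stronger, ``per node'' statement by structural induction on the subtree \(t_n\) rooted at a node \(n\) of \(t\), and then read off the proposition at the root. To each node \(n\) I associate a single stack symbol \(Z_n\in\Gamma\): its label if \(n\) is an interior node or a terminal leaf, and the marker \(\bm{e}\) if \(n\) is an \(\varepsilon\)-leaf. I claim that \(P\) admits a quasi-run \(r_n\) whose initial stack content is exactly \(Z_n\), that reads exactly \(\yield(t_n)\), and such that \(\alpha'(r_n)=\alpha(n)\). Granting this, instantiating \(n\) at the root \(n_\varepsilon\) (so \(Z_{n_\varepsilon}=S=\gamma_0\)) gives a quasi-run \(r\) that starts from \(\ID_s(\yield(t))\), empties its stack, and reads all of \(\yield(t)\); hence \(r\) is a run on \(\yield(t)\), and \(\alpha(r)=\bar a\,\alpha'(r)=\bar a\,\alpha(n_\varepsilon)=\alpha(t)\) by the definitions of the run and tree footprints.

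For the base cases, take \(n\) a leaf, so \(\alpha(n)=a\). If \(n\) is labelled by a terminal \(b\), the action \((q,b,b)\hookrightarrow(q,\varepsilon)\) of Definition~\ref{def:cfg2pda} gives a one-move quasi-run from stack \(b=Z_n\) that reads \(b=\yield(t_n)\); by Definition~\ref{def:quasi-run-foot} its footprint is \(a=\alpha(n)\). If \(n\) is an \(\varepsilon\)-leaf, the action \((q,\varepsilon,\bm{e})\hookrightarrow(q,\varepsilon)\) gives a one-move quasi-run from stack \(\bm{e}=Z_n\) reading \(\varepsilon=\yield(t_n)\), again with footprint \(a=\alpha(n)\).

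For the inductive step, let \(n\) be interior with children \(n_1,\dots,n_k\) (\(k\ge 1\)), arising from a rule \(X\to \ell_1\cdots\ell_k\) with \(\ell_i\) the label of \(n_i\); recall that if some child is an \(\varepsilon\)-leaf then \(k=1\) and the rule is \(X\to\varepsilon\). The first move of \(r_n\) applies the CFG2PDA action of this rule, namely \((q,\varepsilon,X)\hookrightarrow(q,Z_{n_1}\cdots Z_{n_k})\) (which in the \(\varepsilon\)-rule case reads \((q,\varepsilon,X)\hookrightarrow(q,\bm{e})\)); it is an \(\varepsilon\)-move turning the single symbol \(Z_n=X\) into the word \(Z_{n_1}\cdots Z_{n_k}\) of length \(d=k\). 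Since \(P\) has a single state and each action is triggered only by the top stack symbol, I can then execute the quasi-runs \(r_{n_1},\dots,r_{n_k}\) furnished by the induction hypothesis in turn: running \(r_{n_i}\) consumes \(Z_{n_i}\) down to the tail \(Z_{n_{i+1}}\cdots Z_{n_k}\) while reading \(\yield(t_{n_i})\). The concatenation is a quasi-run \(r_n\) with initial stack \(Z_n\) reading \(\yield(t_{n_1})\cdots\yield(t_{n_k})=\yield(t_n)\), and its disassembly in the sense of Lemma~\ref{lem:disassembly} returns precisely \(r_{n_1},\dots,r_{n_k}\). Definition~\ref{def:quasi-run-foot} (which is stated for an arbitrary push-width \(d\), so it applies even though \(P\) is not in reduced form) then gives \(\alpha'(r_n)=a\,\bar{a}^{k}\,\alpha'(r_{n_1})\cdots\alpha'(r_{n_k})\), and by the induction hypothesis this equals \(a\,\bar{a}^{k}\,\alpha(n_1)\cdots\alpha(n_k)=\alpha(n)\).

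The step needing the most care is the reassembly in the inductive case: checking that stacking the independently built sub-quasi-runs \(r_{n_i}\) beneath the shrinking tail yields one legal quasi-run whose Lemma~\ref{lem:disassembly} disassembly is exactly \((r_{n_1},\dots,r_{n_k})\), so that the footprint recurrence transfers verbatim. This rests on the embedding principle that, for the single-state PDA of Definition~\ref{def:cfg2pda}, a move sequence valid from stack \(Z\) (emptying it) stays valid with any tail \(\beta\) appended below (emptying down to \(\beta\)), because moves depend solely on the top symbol. The remaining bookkeeping is routine: the \(\bm{e}\)-marker of an \(\varepsilon\)-rule supplies exactly the lone \(\bar a\) and footprint \(a\) matching the single \(\varepsilon\)-leaf, and the terminal-matching moves consume leaves in left-to-right pre-order so that the concatenated input is indeed \(\yield(t)\).
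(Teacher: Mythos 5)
Your proof is correct and takes essentially the same route as the paper's: your per-node claim \(\alpha'(r_n)=\alpha(n)\) is exactly the paper's strengthened statement \((\alpha(t))_{\ll}=(\alpha(r))_{\ll}\) for quasi-trees, and both arguments build the quasi-run by firing the rule's action for the root and chaining the children's quasi-runs beneath a shrinking stack tail, then transfer the parallel footprint recurrences by induction. The only cosmetic difference is that you merge the existence of the quasi-run (the paper's Lemma~\ref{lem:cfg2pda}) and the footprint equality into a single structural induction, whereas the paper proves them as two separate inductions on tree height.
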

The proof is a simple induction on the height of \(t\). In a similar way, it can be shown that the converse also holds: starting from a run of a PDA, using the classical conversion from PDA to CFG, we obtain a CFG such that the same equality of footprints holds. This enables us to the define the oscillation of the run of a pushdown automaton in the same way we defined the oscillation of the parse tree of a context-free grammar.
\begin{definition}
	Given a \textit{(quasi-)run} \(r\) of a PDA \(P\), define its \textit{oscillation} as that
	of its footprint: \( \osc(r) = \osc(\alpha(r)) \). For \(k\geq 0\), a run
	\(r\) is said to be \(k\)-\emph{oscillating} whenever \(\osc(r) = k\).
	Define \( L^{(k)}(P)\) to be the set of words of \( L(P)\) that are accepted by some \(k\)-oscillating run. 
	We call \( L^{(k)}(P)\) the \textit{\(k\)-oscillating language of} \(P\). We say a language \(L\) is \(k\)-oscillating if there exists a PDA \(P\) such that \( L = L^{(k)}(P)\). With the term \textit{bounded-oscillation} run/language we refer to a \(k\)-oscillating run/language, when \(k\) is not important.
\label{def:OscRun}
\end{definition}

\section{Syntactic Characterization of Bounded-oscillation languages}

In this section, for a given \(k\), we define \textit{\(k\)-oscillating pushdown automaton} which we denote with \( P^{(k)} \). \( P^{(k)} \) generates (quasi-)runs of oscillation exactly \( k\). First, we give an informal description of the notation used to define \(P^{(k)}\).

The actions of \( P^{(k)} \) are derived from the actions of \(P\) by annotating the stack symbols of \(P\). 
In particular, the stack alphabet \(\Gamma'(k)\) of \(P^{(k)}\) is given by \( \Gamma'(k) := \bigcup_{i=0}^k (\Gamma^{(i)} \cup \hat{\Gamma}^{(i)})\), where \( \Gamma^{(i)} := \{ \gamma^{(i)} \mid \gamma \in \Gamma \} \) and 
\( \hat{\Gamma}^{(i)} := \{ \hat{\gamma}^{(i)} \mid \gamma \in \Gamma \} \).
Let \(\gamma' \in \Gamma'\), define \(\ann(\gamma')\) as
\( d \) if \(\gamma' \in \Gamma^{(d)} \) and \(\hat{d}\) if \(\gamma' \in \hat{\Gamma}^{(d)}\).

The goal we seek to achieve by annotating the stack alphabet is given by the next lemma.
\begin{lemma}
	Let \(r=\ID_0,\ldots,\ID_m\) be a quasi-run of \(P^{(k)}\). 
	\begin{compactitem}
	\item if \(\ann(\stack(\ID_0)) = d \) then \(r\) is \(d\)-oscillating, that is \( h_d \preceq \alpha( r )\) and \(h_{d+1} \not \preceq \alpha( r ) \)
	\item if \(\ann(\stack(\ID_0)) = \hat{d} \) then \(\hat{h}_d \preceq \alpha(r)\) and \(h_{d+1} \not
\preceq \alpha(r) \).
	\end{compactitem}
\label{lem:quasi-runs}
\end{lemma}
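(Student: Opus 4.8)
The plan is to prove both items simultaneously by induction on the number \(m\) of moves of the quasi-run \(r\), with Lemma~\ref{lem:disassembly} driving the recursion. Together the two items determine, from \(\ann(\stack(\ID_0))\), two quantities: \(\osc(r)\) and the greatest \(q\) with \(\hat h_q \preceq \alpha(r)\) (call it the \emph{hat-rank} of \(r\)). Indeed \(\ann=d\) is exactly the claim \(\osc(r)=d\), whereas \(\ann=\hat d\) is the stronger claim that \(\osc(r)=d\) \emph{and} the hat-rank of \(r\) reaches \(d\). The engine for the first quantity is Lemma~\ref{lem:osc_disassem}, which computes \(\osc(r)\) from the harmonic and half-harmonic embeddings into \(\alpha(r_1)\) and \(\alpha(r_2)\); the engine for the second is a short forest analysis of the footprint.

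For the base case \(m=1\), reducedness forces the single move to be a pop, so \(\alpha(r)=\bar a\,a\): here \(h_0=\varepsilon\preceq\alpha(r)\), \(\hat h_0=\bar a\,a\preceq\alpha(r)\) and \(h_1\npreceq\alpha(r)\), so both stated properties hold for the annotations (\(0\) or \(\hat 0\)) the construction attaches to a symbol that is immediately popped.

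For the inductive step \(m>1\) the first move is a push \((q,b,\gamma')\hookrightarrow(q,\gamma_1'\gamma_2')\) with \(\gamma'=\stack(\ID_0)\), and Lemma~\ref{lem:disassembly} splits \(r\) into strictly shorter quasi-runs \(r_1,r_2\) whose initial stack symbols are \(\gamma_1',\gamma_2'\). The induction hypothesis converts \(\ann(\gamma_1'),\ann(\gamma_2')\) into exact values of \(\osc(r_i)\) and of the hat-rank of \(r_i\). By Definition~\ref{def:quasi-run-foot} the footprint is \(\alpha(r)=\bar a\,a\;\bar a\,\bar a\,(\alpha(r_1))_{\ll}\,(\alpha(r_2))_{\ll}\); rewriting it as a forest of top-level blocks exhibits one block \(\bar a\,\alpha(r_1)\,a\), the copy of \(r_1\) enclosed by the matching pair created by the push, together with, flush at top level, the blocks coming from \(\alpha(r_2)\) once its leading \(\bar a\,a\) is stripped off. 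Since a block \(\bar a\,F\,a\) embeds \(\hat h_p\) exactly when \(F\) embeds \(h_p\), the enclosing block contributes hat-rank exactly \(\osc(r_1)\), whereas the \(\alpha(r_2)\)-blocks contribute their own hat-ranks, the largest of which is the hat-rank of \(r_2\). Taking the maximum over blocks then yields \(\text{hat-rank}(r)=\max\!\big(\osc(r_1),\,\text{hat-rank}(r_2)\big)\), while Lemma~\ref{lem:osc_disassem} supplies \(\osc(r)\). It remains to check, case by case, that the value \(\ann(\gamma')\) each push action of \(P^{(k)}\) is allowed to write is precisely the pair \((\osc(r),\text{hatted?})\) so computed: the hatted outcome \(\hat d\) occurs exactly when \(r_1\) alone already reaches oscillation \(d\) (the case \(h_d\preceq\alpha(r_1)\) of Lemma~\ref{lem:osc_disassem}), and the plain outcome \(d\) occurs otherwise, when the oscillation \(d\) of \(r\) is produced either by two half-harmonics \(\hat h_{d-1}\) joining across \(r_1,r_2\) or by \(r_2\) alone.

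The main obstacle is precisely this hat-refinement. Lemma~\ref{lem:osc_disassem} settles \(\osc(r)\), hence the plain item, directly, but it says nothing about whether \(\hat h_{\osc(r)}\) embeds, which is what distinguishes \(\hat d\) from \(d\); establishing the clean identity \(\text{hat-rank}(r)=\max(\osc(r_1),\text{hat-rank}(r_2))\) needs the forest/block analysis above and is where the asymmetry between \(r_1\) (enclosed by the new matching pair, hence contributing its whole rank) and \(r_2\) (merged flush, hence contributing only its hat-rank) must be tracked faithfully and mirrored in the different annotations the construction writes on \(\gamma_1'\) and \(\gamma_2'\). Once this identity is in place and matched against the admissible annotation transitions, the induction closes; the remaining harmonic inequalities are routine.
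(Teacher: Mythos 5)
Your skeleton is the same as the paper's---induction on the length of the quasi-run, disassembly into \(r_1,r_2\) via Lemma~\ref{lem:disassembly}, and a block-level reading of the footprint as \(\alpha(r)=\bar{a}\,a\;\;\bar{a}\,\alpha(r_1)\,a\;\;F_2\), where \(F_2\) is \(\alpha(r_2)\) stripped of its leading \(\bar{a}\,a\)---and your identity ``hat-rank of \(r\) \(=\max(\osc(r_1),\) hat-rank of \(r_2)\)'', resting on the correct observation that \(\hat{h}_p\preceq\bar{a}\,F\,a\) if{}f \(h_p\preceq F\), is sound and handles the hatted item well. The genuine gap is in the plain item, exactly where you delegate to Lemma~\ref{lem:osc_disassem}. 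Your induction hypothesis is weaker than you state: for a plain annotation \(d\) it fixes \(\osc(r_i)=d\) but says \emph{nothing} about the hat-rank of \(r_i\), which can be anything up to \(d\) (a plain-annotated run whose own first action is of type 2.(c) of Definition~\ref{def_2PDA} has hat-rank exactly \(d\), by your own identity). Consequently, for the actions where the dominant rank sits in \(r_2\)---types 2.(d) and 2.(b)---you cannot discharge the hypothesis \(\hat{h}_d\npreceq\alpha(r_2)\) demanded by the second bullet of Lemma~\ref{lem:osc_disassem}; in case (b) that hypothesis is provably false, since there the induction hypothesis gives \(\hat{h}_d\preceq\alpha(r_2)\). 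In these configurations none of the cases listed in Lemma~\ref{lem:osc_disassem} is applicable even though \(\osc(r)=d\) does hold, so that lemma cannot ``settle \(\osc(r)\)'' for you. You must instead prove \(h_{d+1}\npreceq\alpha(r)\) directly by the same block analysis you already use for hat-ranks: the two copies of \(\hat{h}_d\) inside \(h_{d+1}\) would have to lie both inside the block \(\bar{a}\,\alpha(r_1)\,a\), both inside \(F_2\), or split across them, and each option is excluded by the induction hypothesis. This direct embedding argument is precisely what the paper's proof does; it never invokes Lemma~\ref{lem:osc_disassem} here.

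Second, your closing case-match would leave a hole in the induction: you claim ``the hatted outcome \(\hat{d}\) occurs exactly when \(r_1\) alone already reaches oscillation \(d\).'' This contradicts your own identity and omits action 2.(b) of Definition~\ref{def_2PDA}, which writes \(\hat{d}\) on the popped symbol while annotating \((\xi)_1\) with some \(\ell<d\) and \((\xi)_2\) with \(\hat{d}\): there \(\osc(r_1)=\ell<d\) and the hat is inherited from \(r_2\), i.e., through the second argument of your \(\max\). Since the lemma quantifies over all quasi-runs of \(P^{(k)}\), including those whose first action is of type (b), this case must be carried through the induction explicitly (hat-rank of \(r\) equals \(d\) because the hat-rank of \(r_2\) does, and \(\osc(r)=d\) by the direct argument above). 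With these two repairs your argument closes and coincides in substance with the paper's proof.
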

Next, we give the construction of \(P^{(k)}\) when \(P\) is in reduced form.  In
appendix we give the construction of \(P^{(k)}\)
given a PDA \(P\) not necessarily in reduced form.
\begin{definition}[\(k\)-oscillating pushdown automaton] 
Let \( P = (\{q\}, \Sigma, \Gamma, \delta, q, \gamma_0) \) be a PDA in reduced form, and let \( k\) be a positive integer.  We define the \textit{\(k\)-oscillating PDA} \( P^{(k)} = (\{q\}, \Sigma,
\Gamma'(k), \delta^{(k)}, q, \gamma_0^{(k)}) \) as follows: 
\begin{compactenum}
\item If \((q, b, \gamma) \hookrightarrow (q, \varepsilon)\in\delta\), then
	\( \{ (q, b, \gamma^{(0)})  \hookrightarrow (q, \varepsilon ),\; (q, b, \hat{\gamma}^{(0)}) \hookrightarrow (q, \varepsilon ) \} \subseteq \delta^{(k)} \)
\item If \((q, b, \gamma) \hookrightarrow (q, \xi) \in \delta\), with \( \xi\in \Gamma^2 \), then \(\delta^{(k)}\) contains the following actions where 
	\(d\) is any value between \(1\) and \(k\) and \(\ell\) is any value between \(0\) and \(d-1\):
\begin{compactenum}[\upshape(\itshape a\upshape)]
\item \( (q, b, \hat{\gamma}^{(d)})	\hookrightarrow (q, (\xi)_1^{(d)}\; (\xi)_2^{(\ell)}) \)
\item \( (q, b, \hat{\gamma}^{(d)}) \hookrightarrow (q, (\xi)_1^{(\ell)}\; (\hat{\xi})_2^{(d)}) \)
\item \( (q, b, \gamma^{(d)}) 			\hookrightarrow (q, (\xi)_1^{(d)}\; (\xi)_2^{(\ell)}) \)
\item \( (q, b, \gamma^{(d)}) 			\hookrightarrow (q, (\xi)_1^{(\ell)}\; (\xi)_2^{(d)}) \)
\item \( (q, b, \gamma^{(d)}) 			\hookrightarrow (q, (\xi)_1^{(d-1)}\; (\hat{\xi})_2^{(d-1)}) \)
\end{compactenum}

\end{compactenum}
\label{def_2PDA}		
\end{definition}
Let us explain the intuition behind the definition. Assume a quasi-run \(r\) of a reduced PDA and 
\begin{dependency}[anchor=base, baseline=0, theme=simple, arc edge , arc angle = 10]
\begin{deptext}[column sep=0.2cm]
	\(\alpha(r)\) \& \(=\) \& \(\bar{a}\) \& \( a \) \& \(\bar{a}\) \& \(\bar{a}\) \& \(a\) \& \((\alpha'(r_1))_{\ll}\) \& \(a\) \& \( (\alpha'(r_2))_{\ll}\).\\ 
	\end{deptext}
	\depedge{3}{4}{}
	\depedge{5}{9}{}
	\depedge{6}{7}{}
\end{dependency} %
We deduce from \(\alpha(r)\) that the first action of \(r\) is pushing two symbols yielding quasi-runs \(r_1\) and \(r_2\). We call the two symbols \((\xi)_1\) and \((\xi)_2\) as
in Definition~\ref{def_2PDA}. Now, suppose we want \(\osc(r)\geq d\), that is \(h_d \preceq \alpha(r)\).
By definition of the harmonics, we have that \(h_d = \bar{a}\; h_{d-1}\; a\ \bar{a}\; h_{d-1}\; a = \hat{h}_{d-1} \hat{h}_{d-1}\).
By Lemma~\ref{lem:osc_disassem}, one way to achieve \(\osc(r)\geq d\) is to have \( h_{d-1} \preceq
(\alpha'(r_1))_{\ll} \) and \( \hat{h}_{d-1} \preceq
(\alpha'(r_2))_{\ll}\). This situation is dealt with by the actions defined at
point 2.e.  Observe that we have a stronger requirement on
\(\alpha'(r_2)_{\ll}\) than on \(\alpha'(r_1)_{\ll}\). Indeed, we require  \(
\hat{h}_{d-1} \preceq (\alpha'(r_2))_{\ll}\) because in \(\alpha(r)\)
we have \( \bar{a}\; \alpha'(r_1)_{\ll}\; a \preceq \alpha(r) \). The purpose
of the hat annotation is to convey that stronger requirement.  This is why we
push \((\xi)_2\in\Gamma'\) such that \(\ann((\xi)_2)= \widehat{d{-}1} \).
Another way to have \(h_d \preceq \alpha(r)\) is to have \(h_d \preceq
(\alpha'(r_i))_{\ll}\) for some \(i=1,2\).  This situation is dealt with by the
actions defined at points from 2.a to 2.d.

The correctness of the transformation from \(P\) to \(P^{(k)}\) is captured by the 
next statement.

\begin{theorem}
	Let \(P\) be a PDA in reduced form and \(k\geq 0\). The following is true.
	\begin{compactenum}[\upshape(\itshape a\upshape)]
	\item If \(r\) is a run of \(P^{(k)}\) on input word \(w\) then \(\osc(r)=k\) and there is a run \(r'\) of \(P\) such that it accepts \(w\) and \( \osc(r')=k\); and  
	\item If \(r\) is a run of \(P\) on input \(w\) and \(\osc(r)=k\) then there is a run \(r'\) of \(P^{(k)}\) such that it accepts \(w\).
	\end{compactenum}
\label{th:k_runs}
\end{theorem}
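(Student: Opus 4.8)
The plan is to prove both directions by induction, using the disassembly of quasi-runs (Lemma~\ref{lem:disassembly}) together with the characterization of oscillation in terms of subruns (Lemma~\ref{lem:osc_disassem}) as the main engines. The crucial observation is that Definition~\ref{def_2PDA} has been engineered so that the annotation on the top stack symbol encodes exactly the oscillation target carried by Lemma~\ref{lem:quasi-runs}: a symbol in $\Gamma^{(d)}$ demands a $d$-oscillating quasi-run (i.e.\ $h_d \preceq \alpha(r)$ and $h_{d+1} \npreceq \alpha(r)$), while a symbol in $\hat\Gamma^{(d)}$ demands the stronger $\hat h_d \preceq \alpha(r)$ with $h_{d+1}\npreceq\alpha(r)$. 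So I would first make Lemma~\ref{lem:quasi-runs} the central lemma, proving it by induction on the length $m$ of the quasi-run and then derive Theorem~\ref{th:k_runs} as a corollary by taking $d=k$ and the initial symbol $\gamma_0^{(k)}$.

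For the proof of Lemma~\ref{lem:quasi-runs}, the base case is a single-move quasi-run, which must use an action from point~1 of Definition~\ref{def_2PDA} with annotation $0$ or $\hat0$; in both cases $\alpha(r)=\bar a\,a = h_1$-free but contains $h_0=\varepsilon$, so $0$-oscillation holds. For the inductive step I disassemble $r$ into its first move (necessarily an action from point~2, since the stack grows) and the two subruns $r_1,r_2$, whose initial stack symbols carry the annotations dictated by whichever of actions 2.a--2.e was fired. I then apply the induction hypothesis to $r_1$ and $r_2$ to learn which harmonics they embed, and combine this with Lemma~\ref{lem:osc_disassem} to compute $\osc(r)$. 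The five cases 2.a--2.e correspond precisely to the cases of Lemma~\ref{lem:osc_disassem}: case 2.e realizes the first bullet ($h_{d-1}\preceq\alpha(r_1)$ and $\hat h_{d-1}\preceq\alpha(r_2)$ giving $h_d\preceq\alpha(r)$), while cases 2.a--2.d realize the second bullet (one subrun already achieving $h_d$ or $\hat h_d$). One must check in each case that $h_{d+1}$ is \emph{not} embedded, which is where the constraint $\ell\le d-1$ on the annotation of the non-dominant subrun is used.

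For Theorem~\ref{th:k_runs} itself, part~(a) then follows: given a run $r$ of $P^{(k)}$, its initial symbol is $\gamma_0^{(k)}\in\Gamma^{(k)}$, so Lemma~\ref{lem:quasi-runs} gives $\osc(r)=k$; stripping the annotations off every stack symbol yields a run $r'$ of $P$ with the same footprint, hence the same oscillation and the same accepted word $w$ (the input letters are untouched by the annotation). For part~(b), given a run $r$ of $P$ with $\osc(r)=k$, I would annotate its stack symbols top-down, choosing at each push the action among 2.a--2.e whose induced annotations match the actual oscillations of the two resulting subruns; the existence of a consistent such choice is guaranteed because Lemma~\ref{lem:osc_disassem} enumerates \emph{exactly} the ways a $k$-oscillating run can arise, and Definition~\ref{def_2PDA} provides an action for each. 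I expect the main obstacle to be part~(b): one must argue that greedily propagating annotations down the run never gets stuck, which amounts to verifying that for every split $\osc(r_1),\osc(r_2)$ compatible with $\osc(r)=d$ via Lemma~\ref{lem:osc_disassem}, some action of point~2 produces matching annotations $(\ann((\xi)_1),\ann((\xi)_2))$ with the required legal range $1\le d\le k$, $0\le\ell\le d-1$. Making this case analysis exhaustive and checking that the hat/non-hat distinction is always assignable consistently is the delicate bookkeeping at the heart of the argument.
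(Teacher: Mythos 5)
Your proposal is correct and takes essentially the same approach as the paper: part (a) is obtained from Lemma~\ref{lem:quasi-runs} (proved, as you plan, by induction on quasi-run length via disassembly) together with annotation-stripping, and part (b) is the paper's induction on the length of the quasi-run of \(P\)---your ``top-down annotation'' is exactly that induction, where Lemma~\ref{lem:osc_disassem} selects which of the actions 2.a--2.e to fire at each push. The hat/non-hat bookkeeping you flag as the delicate point is handled in the paper precisely by strengthening the induction hypothesis to cover quasi-runs started from both \(\gamma^{(k)}\) and \(\hat{\gamma}^{(k)}\).
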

Note that, as a consequence of the theorem, the following equality holds: \(L^{(k)}(P) = L(P^{(k)})\).

We conclude this section by giving upper-bounds on the size of the PDA \(
P^{(k)}\) relatively to the size of \(P\).
The size \(\len{P}\) of a PDA \(P\) is defined as \( \len{Q} + \len{\Sigma} +
\len{\Gamma} + \len{\delta} \). Relatively to \(P\), \( P^{(k)}\) has
\(\Gamma'(k)\), its stack alphabet, such that \(\len{\Gamma'(k)} = O(k \cdot
\len{\Gamma})\); \(\delta^{(k)}\), its actions, such that \(\len{\delta^{(k)}}=
O(\len{\delta}\cdot k^2) \), where \(2\) comes from the length of the sequence
\(\xi\) pushed onto the stack. Hence we find that the size of \(P^{(k)}\) is \(
O(|P|\cdot k^2) \).  The bounds for the size of \(P^{(k)}\) in the general case
and the calculations of those bounds can be found in the appendix.

\section{Operations and Decision Problems}

In this section, we first study the complexity of \(k\)-emptiness problem that
asks, for a given \(k\) and PDA \(P\), whether \(P\) has a \(k\)-oscillating run.
Then we study closure properties, for boolean operations, of the class of
bounded-oscillation languages.

\subsection{Emptiness Check}

We give a non-deterministic algorithm, called \ProcSty{query}, with three
arguments: a stack symbol \(\gamma\), an integer \(k\) and a 2-valued variable
\(h\) which can be set to HAT or NO\_HAT.  Together the three arguments
represent a stack symbol of \(P^{(k)}\), e.g., \(\gamma\),
\(k\) and HAT stand for \(\hat{\gamma}^{(k)}\).

Intuitively, the non-deterministic algorithm searches for a \(k\)-oscillating
run of \(P\) by building a run of \(P^{(k)}\) (Theorem~\ref{th:k_runs}).  It
first guesses an action \( (q,b,\gamma) \hookrightarrow (q,(\xi)_1\, (\xi)_2)\) of \(P\),
then it further guesses, through the switch statement, a case of
definition~\ref{def_2PDA}, point 2.
In a sense, given the actions of \(P\) the algorithm constructs the actions of
\(P^{(k)}\) on-the-fly.

\begin{algorithm}[h]
\ProcSty{query(}\ArgSty{\( \gamma \), $k$, $h$}\ProcSty{)}{
\\
\KwData{\(\gamma \in \Gamma\);  $k$: an integer; $h$: a 2-valued variable over \(\{\)HAT, NO\_HAT\(\}\)}
\KwResult{if \(L^{(k)}(P)\neq \emptyset\) then \ProcSty{query(}\ArgSty{\( \gamma_0 \), $k$, NO\_HAT}\ProcSty{)}
has an execution that returns; otherwise all of its executions are blocked at some \textbf{assume} statement. }
\lIf {k < 0}{\textbf{assume} false}
pick \( \xi \in \Gamma^{2} \cup \{ \varepsilon \} \)\;
\textbf{assume} \(  (q, b, \gamma) \hookrightarrow (q, \xi)\in \delta\) for some \(b\in \Sigma\cup\{\varepsilon\}\)\;
\If(\tcp*[f]{popping rule}){\( \xi = \varepsilon\)}{
\textbf{assume} \( k = 0\)\;
\KwRet\;
} 
\tcc{pushing two symbols onto the stack, \( \xi \in \Gamma^{2} \)}
	pick \( \ell \in \{0, 1, \ldots, k-1 \} \)\;
	\Switch(\tcp*[f]{non-deterministically executes one case}){*}{
		\uCase(\tcc*[f]{top stack symbol annotated with a \( \hat{} \)}){\upshape(\itshape a\upshape)}{
			\textbf{assume} \(h\) = HAT;
			\ProcSty{query(}\ArgSty{\((\xi)_2\), $\ell$, NO\_HAT}\ProcSty{)};
			\ProcSty{query(}\ArgSty{\((\xi)_1\), $k$, NO\_HAT}\ProcSty{)}\;
		}
		\uCase{\upshape(\itshape b\upshape)}{
			\textbf{assume} \(h\) = HAT;
			\ProcSty{query(}\ArgSty{\((\xi)_1\), $\ell$, NO\_HAT}\ProcSty{)};
			\ProcSty{query(}\ArgSty{\((\xi)_2\), $k$, HAT}\ProcSty{)}\;
		}
		\uCase(\tcc*[f]{top stack symbol without a \( \hat{} \) annotation}){\upshape(\itshape c\upshape)}{
			\textbf{assume} \(h\) = NO\_HAT;
			\ProcSty{query(}\ArgSty{\((\xi)_2\), $\ell$, NO\_HAT}\ProcSty{)};
			\ProcSty{query(}\ArgSty{\((\xi)_1\), $k$, NO\_HAT}\ProcSty{)}\;
		}
		\uCase{\upshape(\itshape d\upshape)}{
			\textbf{assume} \(h\){=}NO\_HAT;
			\ProcSty{query(}\ArgSty{\((\xi)_1\){,}$\ell${,}NO\_HAT}\ProcSty{)};
			\ProcSty{query(}\ArgSty{\((\xi)_2\){,}$k${,}NO\_HAT}\ProcSty{)}\;
		}
		\Case{\upshape(\itshape e\upshape)}{
			\textbf{assume} \(h\){=}NO\_HAT;
			\ProcSty{query(}\ArgSty{\((\xi)_1\){,}$k{-}1${,}NO\_HAT}\ProcSty{)};
			\ProcSty{query(}\ArgSty{\((\xi)_2\){,}$k{-}1${,}HAT}\ProcSty{)}\;
		}
	}	
}
\caption{\ProcSty{query(}\ArgSty{\( \gamma \), $k$, $h$}\ProcSty{)}, for a PDA \(P=(\{q\}, \Sigma, \Gamma, \delta, q, \gamma_0)\) in reduced form}
\label{query}
\end{algorithm}

\begin{theorem}
Given a PDA \(P\) in reduced form and a natural number \(k \geq 0\) there exists
a NSPACE(\( k \log( \len{P} )\)) decision procedure for the \(k\)-emptiness problem. %
\label{th:nspace}
\end{theorem}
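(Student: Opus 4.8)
The plan is to establish two things about Algorithm~\ref{query}: that it decides $k$-emptiness, and that it runs in the claimed space.

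For correctness I would prove, by induction on the length of a quasi-run, the invariant: for every $\gamma \in \Gamma$, every $d \ge 0$ and every value of $h$, the call $\textsc{query}(\gamma, d, h)$ admits an execution that returns (rather than blocking at some \textbf{assume}) if and only if $P^{(k)}$ has a quasi-run whose initial stack symbol is $\gamma^{(d)}$ when $h = \textsf{NO\_HAT}$, and $\hat{\gamma}^{(d)}$ when $h = \textsf{HAT}$. The base case is a one-move quasi-run, i.e.\ a single pop; the branch $\xi = \varepsilon$ returns exactly when $d = 0$, which matches point~1 of Definition~\ref{def_2PDA} (pops exist only for annotation $0$). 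For the inductive step, a longer quasi-run starts with a push of two symbols and disassembles (Lemma~\ref{lem:disassembly}) into two quasi-runs $r_1, r_2$; the five cases of the \textbf{switch} are in bijection with the actions (a)--(e) of point~2 of Definition~\ref{def_2PDA}, and the arguments handed to the two recursive calls are exactly the annotations that the chosen action places on the two pushed symbols, so the induction hypothesis applies to $r_1$ and $r_2$. Instantiating the invariant at the start symbol gives that $\textsc{query}(\gamma_0, k, \textsf{NO\_HAT})$ returns iff $P^{(k)}$ has a quasi-run from $\gamma_0^{(k)}$, i.e.\ iff $L(P^{(k)}) \neq \emptyset$; by Theorem~\ref{th:k_runs} this is the same as $L^{(k)}(P) \neq \emptyset$.

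For the space bound, each activation record stores only its three arguments together with $O(1)$ local data (the guessed action and $\ell$): a symbol of $\Gamma$, an integer in $\{0,\dots,k\}$, and a flag, i.e.\ $O(\log|P|)$ bits, the integer contributing $O(\log k)$ which we absorb into the bound. Everything then reduces to bounding the recursion depth, and this is where I expect the only real difficulty. Naively the recursion is as deep as the run itself, because in cases (a)--(d) the second recursive call carries the \emph{same} budget $k$. The key observation is that in every case this second call is in \emph{tail position}, while the first recursive call always uses a \emph{strictly smaller} integer argument ($\ell \le k-1$ in (a)--(d) and $k-1$ in (e)). Eliminating tail calls by reusing the current record (standard), only the non-tail calls consume additional space on the work tape.

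Finally I would argue that the records simultaneously present on the work tape form a chain $f_1, \dots, f_D$ in which each $f_{i+1}$ is the non-tail child of $f_i$; since a non-tail call strictly decreases the integer argument and tail-call reuse never increases it, the arguments of $f_1, \dots, f_D$ are a strictly decreasing sequence in $\{0,\dots,k\}$, whence $D \le k+1$. Multiplying the $O(k)$ depth by the $O(\log|P|)$ per record yields the $O(k\log|P|)$ bound, and since \textsc{query} is nondeterministic and the \textbf{assume} statements prune precisely the infeasible branches, this is a NSPACE$(k\log|P|)$ decision procedure for $k$-emptiness. The hard part, as noted, is the depth analysis: it genuinely relies on the interplay between tail-call elimination and the fact that the non-tail branch always spends one unit of the oscillation budget.
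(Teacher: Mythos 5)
Your proposal is correct. The space half of it is essentially the paper's own argument: the paper likewise observes that the second call in every switch case is tail-recursive and can be compiled away, that the non-tail call always carries a strictly smaller integer ($\ell < k$ in cases (a)--(d), $k-1$ in case (e)), and concludes that at most $k$ frames of $O(\log\len{P})$ bits each are ever live; your explicit chain argument ($f_{i+1}$ the non-tail child of $f_i$, current arguments strictly decreasing, hence depth at most $k+1$) is a more careful justification of the same bound. Where you genuinely diverge is the correctness half. The paper never goes through $P^{(k)}$ in its proof: it shows, by induction, that \ProcSty{query(}\ArgSty{$\stack(\ID)$, $k$, NO\_HAT}\ProcSty{)} returns iff there is a quasi-run $r$ of $P$ \emph{itself} from $\ID$ with $h_k \preceq \alpha(r)$ and $h_{k+1} \npreceq \alpha(r)$ (plus the hat variant with $\hat{h}_k$), using Lemma~\ref{lem:osc_disassem} to distribute the harmonic conditions over the disassembled quasi-runs $r_1,r_2$. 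You instead prove a purely syntactic correspondence between returning executions of \ProcSty{query} and quasi-runs of $P^{(k)}$ (the five switch cases match Definition~\ref{def_2PDA} case by case, and the recursive arguments are exactly the annotations placed on the pushed symbols), and then delegate all harmonic reasoning to Theorem~\ref{th:k_runs}. Both routes work; yours is more modular and formalizes the intuition the paper only states informally (that the algorithm builds a run of $P^{(k)}$ on-the-fly), whereas the paper's is self-contained in that Theorem~\ref{th:nspace} does not rest on the appendix-proved correctness of the $P^{(k)}$ construction. Two points to tighten in a full write-up: the claimed \emph{iff} cannot be handled by a single induction on quasi-run length --- as in the paper, the direction from quasi-runs to returning executions inducts on the length of the quasi-run, while the converse inducts on the number of calls to \ProcSty{query}, stitching the sub-quasi-runs returned by the induction hypothesis into one quasi-run; and the integer argument costs $O(\log k)$ bits per frame, which is absorbed into $O(k\log\len{P})$ only when $\log k = O(\log \len{P})$ (a gap the paper shares, and which can be repaired, e.g.\ by storing the successive decrements of the budget in unary, for a total of $O(k)$ extra bits).
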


\begin{proof}
We prove the following: \ProcSty{query(}\ArgSty{\( \gamma_0 \), $k$,
NO\_HAT}\ProcSty{)} returns if{}f \(L^{(k)}(P) \neq \emptyset\), or
equivalently \ProcSty{query(}\ArgSty{\( \gamma_0 \), $k$, NO\_HAT}\ProcSty{)}
has an execution that returns if{}f \(P\) has a \(k\)-oscillating run.
As usual with induction, we prove a stronger statement: 
given an ID \(\ID\) with \(\stack(\ID)\in\Gamma\),
\begin{compactitem}
\item \ProcSty{query(}\ArgSty{\( \stack(\ID) \), $k$, NO\_HAT}\ProcSty{)} returns %
if{}f %
there exists a quasi-run \(r\) from \(\ID\) such that \(h_k \preceq \alpha(r)\) and
\(h_{k+1} \npreceq \alpha(r)\); and
\item \ProcSty{query(}\ArgSty{\( \stack(\ID) \), $k$, HAT}\ProcSty{)} returns %
if{}f %
there exists a quasi-run \(r\) from \(\ID\) such that \(\hat{h}_k \preceq \alpha(r)\) and
\(h_{k+1} \npreceq \alpha(r)\).
\end{compactitem}

The proof of right-to-left direction goes by induction on the number \(m\) of
steps in the quasi-run from \(\ID\). 
If \(m=1\), then the quasi-run \(r\) is such that \(\ID \vdash \ID'\).
In this case, both  \ProcSty{query(}\ArgSty{\( \stack(\ID) \), $0$,
NO\_HAT}\ProcSty{)} and  \ProcSty{query(}\ArgSty{\( \stack(\ID) \), $0$,
HAT}\ProcSty{)} return by picking \(\xi=\varepsilon\) and the same action
as the one used to produce \(\ID'\).

Next consider \(m>1\). Since \(P\) is in reduced form the first step in \(r\) is
given by \(\ID_0 \vdash \ID_1\) where \(\stack(\ID_1)=\xi\), \(\xi\in\Gamma^2\).
Thus we can disassemble \(r\) as in Lemma~\ref{lem:disassembly}: the first move and
two quasi-runs \(r_1=J_0\ldots J_{m_j}\) and \(r_2=K_0\ldots K_{m_k}\) 
such that \(\stack(J_0)=(\xi)_1\), \(\stack(K_0)=(\xi)_2\), \(m_j, m_k < m\). First assume that \(h_k \preceq \alpha(r)\) and \(h_{k+1}\npreceq \alpha(r)\).
As before, from Lemma~\ref{lem:osc_disassem} we can reason about footprints of \(r_1\) and \(r_2\) through the two cases as listed in the lemma.
First consider the first case where \(\osc(\alpha(r_1)) = k-1\) and \(\osc(\alpha(r_2)) = k-1\) with \(\hat{h}_{k-1} \preceq \alpha(r_2)\). The induction hypothesis shows that 
\ProcSty{query(}\ArgSty{\( \stack(J_0) \), $k-1$, NO\_HAT}\ProcSty{)} and 
\ProcSty{query(}\ArgSty{\( \stack(K_0) \), $k-1$, HAT}\ProcSty{)} both return.
A close examination of \ProcSty{query} shows that
\ProcSty{query(}\ArgSty{\( \stack(\ID_0) \), $k$, NO\_HAT}\ProcSty{)} returns.

Considering the second case, the reasoning goes along the same lines as in the first one.

The proof for the case \(\hat{h}_k \preceq \alpha(r)\) and
\(h_{k+1}\npreceq \alpha(r)\) is similar.\\
For the left-to-right direction we proceed by induction on the number \(m\) of
calls to \ProcSty{query} along an execution that returns. For \(m=1\) we
necessarily have that either \ProcSty{query(}\ArgSty{\( \stack(\ID) \), $0$,
NO\_HAT}\ProcSty{)} or \ProcSty{query(}\ArgSty{\( \stack(\ID) \), $0$,
HAT}\ProcSty{)} was invoked. Then, clearly, there exists a one-move, quasi-run
\(r\) from \(\ID\) such that \(\hat{h}_0 \preceq \alpha(r)\) and
\(h_1\npreceq \alpha(r)\).

When \(m>1\), it implies that one of the case of the switch statement has been
taken followed by two calls to \ProcSty{query}. Applying the induction
hypothesis on each of these calls returns quasi-runs \(r_1\) and \(r_2\). It is
not difficult to see that they can be stitched together into a larger quasi-run
\(r\). As before, through reasoning about the footprint and relation \(\preceq\) it is possible to
show the desired property on \(\alpha(r)\).

To complete the proof of the theorem, we still have to show that the algorithm runs in NSPACE\((k\cdot\log\len{P})\). Again we observe that, in each call of query, the algorithm chooses non-deterministically between cases (a) to (e), each of which one has two calls to \ProcSty{query}. 
We observe that in all cases, the second call to \ProcSty{query} is tail-recursive and thus can be compiled away using extra variables and an unconditional jump. As for the first call, the integer being passed to \ProcSty{query} in cases (a) to (d) is \( \ell<k\).
In the case (e), the parameter that is passed to the first \ProcSty{query} call
is decreased by one: \(k-1\). Therefore, we see that along every execution we
need at most \(k\) stack frames to track the stack symbol which can be encoded
with \( \log{ \len{P} } \) bits. Hence that \( L^{(k)}(P) \neq 0\) can be
decided in NSPACE(\( k \log{ \len{P} }\)).
\end{proof}

Deciding the \(k\)-emptiness problem with \(P\) in reduced form is in NSPACE(\(
k \log( \len{P} )\)), hence it is in NLOGSPACE when \(k\) is fixed and not part
of the input. We claim that even if \(P\) is not in reduced form then
the \(k\)-emptiness problem where \(k\) is not part of the input is also in
NLOGSPACE.  To see this, we use the fact that if a decision problem \(B\) is
logspace reducible to a decision problem \(C\), and \(C \in \text{NLOGSPACE}\),
then \(B \in \text{NLOGSPACE}\) (see the proof of Lemma 4.17, point 2., in
Arora and Barak \cite{AroraB09} for LOGSPACE, the proof for NLOGSPACE is the
same). Our claim then follows from Theorem~\ref{th:nspace} and the fact we can
reduce, in deterministic logarithmic space, the \(k\)-emptiness problem for the
general form PDA to the \(k\)-emptiness problem for the reduced PDA.

Let us briefly describe how to obtain the reduced form PDA from the general
form and argue it can be computed in deterministic logarithmic space.
To obtain the reduced form PDA \(P_r\) from the general form PDA \(P\) we apply
two transformations. 

First, we reduce \(P\) to a form (denoted with \(P'\)) such that each action
either pops a symbol or pushes two symbols onto the stack.  This is done by
splitting actions whose stack words on the right-hand side have more than \(2\)
symbols into actions with stack words of exactly two symbols (or adding a
“dummy” symbol that will be pushed and immediately popped in case of a stack
word of length \(1\)). For example, the action of \(P\) \( (p, b, \gamma)
\hookrightarrow (q, \xi_1 \xi_2 \xi_3)\) yields, in \(P'\), two actions in
\(P'\) \( (p, \varepsilon, \gamma) \hookrightarrow (p_{1}, \xi'_1 \xi_3)\) and
\( (p_{1}, b, \xi'_1) \hookrightarrow (q, \xi_1 \xi_2)\), where \(p_1\) and
\(\xi'_1\) are fresh state and stack symbol.  For this, the Turing machine
enumerates the actions of \(P\), one by one, and split them when needed.
Since, to split an action, it is enough to maintain indices pointing at the
input tape, \(P'\) can be computed from \(P\) in deterministic logarithmic
space.

Second, we transform \(P'\) into \(P_r\) by encoding the states of \(P'\) into
stack symbols of \(P_r\).  From \(P'\), we thus obtain a reduced PDA \(P_r\)
with only one state, which we call \(q_r\).  Applying this transformation on
the actions given above, we obtain: \( (q_r, \varepsilon, [p \gamma r])
\hookrightarrow (q_r, [p_{1} \xi'_1 s] [s \xi_3 r])\) and \( (q_r, b, [p_{1}
\xi'_1 r]) \hookrightarrow (q_r, [q \xi_1 s] [s \xi_2 r])\) for all states \(r,
s\) of \(P'\). A Turing machine computes \(P_r\) given \(P'\)
essentially by enumerating, for each action of \(P'\), the states of \(P'\). Again
it is enough to maintain indices and thus \(P_r\) can be computed from \(P'\)
in deterministic logarithmic space.
From Lemma 4.17, point 1., in Arora and Barak \cite{AroraB09}, we conclude that
the reduction that composes the previous two can be performed in deterministic logarithmic space.

Note that reducing \(P\) to \(P_r\) clearly results in change of footprints of a run of \(P\) and the corresponding run of \(P_r\).
However, it can easily be
seen that this change of footprints will have no effect on the oscillation of the run. That is because the reduction of \(P\) affects the change of stack during a run in such a way that it only splits longer stack words into words of length \(2\) that are consecutively pushed onto the stack, so the net result stays the same.
To illustrate this, let us take a look at a simple example.
\begin{example}
Applying the action \( (p, b, \gamma)
\hookrightarrow (q, \xi_1 \xi_2 \xi_3)\) of \(P\) results in the footprint \( \alpha_1 = a \; \bar{a} \; \bar{a} \; \bar{a} \).
After that action has been split into two actions (as it has been done above), the resulting footprint is:
\( \alpha_2 = a \; \bar{a} \; \bar{a} \; a \; \bar{a} \; \bar{a} \). Observe that the symbols \((\alpha_2)_3\) and \((\alpha_2)_4\) form a matching pair (since they correspond to pushing and popping the symbol \( \xi'_1\)) and their deletion would yield the footprint equal to \( \alpha_1\). Hence that, when determining the rank of the footprint, they do not contribute to it.
\end{example}

Observe that we can modify \ProcSty{query} to solve the \emph{\(k\)-membership
problem} that asks given a word \(w\), a PDA \(P\) and a number \(k\) whether
\(w\in L^{(k)}(P)\).  The modification consists in adding an array containing
the input word \(w\), and two indices. The details are easy to recover.
\begin{lemma}
Assume a PDA \(P\) such that the maximum stack height in any run of \(P\) is \(k\). Then \(P\) is at most \(k\)-oscillating.
\label{lem:osc-and-height}
\end{lemma}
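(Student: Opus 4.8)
The statement relates stack height to oscillation, so the natural strategy is to bound the oscillation of a run by bounding the rank of its footprint. I would translate "maximum stack height is $k$" into a structural constraint on the footprint $\alpha(r)\in L_D$ of any run $r$, and then argue that such a constraint prevents the harmonic $h_{k+1}$ from being embedded, whence $\osc(r)=\rank(\alpha(r))\le k$. The key observation is that stack height at any moment of the run corresponds, in the footprint word, to the current \emph{nesting depth} of matching pairs: when we read the footprint left to right, each $\bar a$ (a push) increases the depth of open matching pairs by one and each $a$ (a pop) decreases it by one. So bounding the stack height by $k$ amounts to bounding the maximal nesting depth of matching pairs in $\alpha(r)$ by $k$ (up to the additive constants coming from the leading $\bar a\, a$ that Definition~\ref{def:quasi-run-foot} prepends).

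First I would make precise the correspondence between the stack height along $r$ and the nesting depth of arrows in $\alpha(r)$. Using the disassembly of Lemma~\ref{lem:disassembly} and the inductive definition of $\alpha'$, the footprint of a quasi-run starting from a single stack symbol has the form $\bar a\,a\,(\cdots)$, and each recursive descent into a sub-quasi-run $r_i$ corresponds to pushing one symbol, i.e.\ opening one matching pair that stays open for the duration of $r_i$. Thus the maximum number of simultaneously open arrows equals the maximum stack height, which by hypothesis is at most $k$.

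Next I would show that an embedded harmonic $h_{q}$ forces nesting depth at least $q$. Recall $\hat h_{i}=\bar a\, h_i\, a$ and $h_{i+1}=\hat h_i\,\hat h_i$; unfolding this, $h_{q}$ contains, as its leftmost branch, the strictly nested pattern $\bar a\,\bar a\,\cdots\,\bar a\,a\,\cdots\,a$ of depth $q$. Since $\preceq$ deletes matching pairs but can never \emph{increase} the nesting depth of the surviving pairs (deleting a pair only removes a level of nesting or leaves depths unchanged, and the arrows-cannot-cross property is preserved), if $h_{q}\preceq\alpha(r)$ then $\alpha(r)$ itself must contain $q$ pairwise-nested matching pairs, i.e.\ must reach nesting depth $\ge q$. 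Combining the two steps: if the stack height, hence the nesting depth of $\alpha(r)$, is at most $k$, then $h_{k+1}\npreceq\alpha(r)$, so $\rank(\alpha(r))\le k$ and therefore $\osc(r)\le k$, which is exactly the claim that $P$ is at most $k$-oscillating.

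The main obstacle I anticipate is handling the bookkeeping of the additive constants in the footprint definition: the prepended $\bar a$ at the root and the leading $a$ in each $\alpha'$ mean that the identification "stack height $=$ nesting depth of arrows" is only correct up to a fixed offset, and I must check that this offset does not shift the bound away from $k$. The cleanest way around this is to prove the nesting-depth claim directly for the leftmost fully-nested branch of $h_{k+1}$ and verify that realizing it genuinely requires $k+1$ distinct symbols to be simultaneously on the stack, rather than trying to set up an exact equation between height and depth at every position. The monotonicity of nesting depth under $\preceq$ is intuitively clear but deserves a careful one-line justification from the no-crossing property of matching pairs established just before Lemma~\ref{lem:poleq}.
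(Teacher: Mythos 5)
Your proposal is correct and is essentially the paper's own argument: the paper proves the lemma by contradiction in one step, observing that \(h_{k+1}\preceq\alpha(r)\) would place \(k+1\) nested, still-unmatched \(\bar{a}\)'s (the leading pushes of \(h_{k+1}\)) in \(\alpha(r)\), forcing stack height at least \(k+1\). Your more detailed treatment of the stack-height/nesting-depth correspondence and of the monotonicity of nesting under \(\preceq\) simply makes explicit what the paper's terse proof leaves implicit.
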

\begin{proof}
Assume \(r\) is \((k+1)\)-oscillating. Then \(h_{k+1} \preceq \alpha(r)\). Since \(h_{k+1} = \underbrace{\bar{a} \; \bar{a} \ldots \bar{a}}_{k+1} \; a \ldots \;\), the maximum stack height during \(r\) is at least \(k+1\), hence we have a contradiction.
\end{proof}

\begin{theorem}
Given a PDA \(P\) and a positive integer \(k\), the problem of deciding whether
\(P\) has a \(k\)-oscillating run is NLOGSPACE-complete when \(k\) is not part
of the input.
\label{nl-compelete}
\end{theorem}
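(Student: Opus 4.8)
The statement has two halves: membership in NLOGSPACE and NLOGSPACE-hardness. The plan is to dispatch membership first, then hardness.

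For \emph{membership}, I would appeal directly to Theorem~\ref{th:nspace}, which gives a NSPACE$(k\log\len{P})$ procedure for the $k$-emptiness problem when $P$ is in reduced form. Since $k$ is fixed and not part of the input, $k\log\len{P}$ is $O(\log\len{P})$, so this is NLOGSPACE for reduced-form PDA. To handle arbitrary PDA, I would invoke the logspace reduction sketched in the discussion following Theorem~\ref{th:nspace}: the transformation of a general PDA $P$ into an equivalent reduced-form PDA $P_r$ (splitting long push actions and folding states into stack symbols) is computable in deterministic logarithmic space and, crucially, preserves oscillation (only matching pairs are inserted, which do not affect rank). Composing a logspace reduction with an NLOGSPACE decision procedure stays in NLOGSPACE by the closure argument already cited (Arora--Barak, Lemma~4.17), so the general $k$-emptiness problem is in NLOGSPACE.

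For \emph{hardness}, the plan is to reduce graph reachability (\textsc{STCON}/\textsc{PATH}), the canonical NLOGSPACE-complete problem, to $k$-emptiness in logspace. The idea is to realize reachability in a directed graph $D$ as the existence of a bounded-oscillation run in a PDA whose stack never grows large. Concretely, I would take $k=1$ (or another fixed constant) and build a PDA $P_D$ from $D=(V,E)$ whose runs simulate walks in $D$: the start vertex corresponds to the initial configuration, each edge $(u,v)$ becomes an action, and reaching the target vertex $t$ allows the stack to empty. The key design constraint is to keep the stack height bounded by the fixed constant $k$ so that, by Lemma~\ref{lem:osc-and-height}, every run is at most $k$-oscillating; one then arranges that an accepting run of the desired oscillation exists precisely when $t$ is reachable from the source. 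Because the PDA must be produced in logspace, its description should be a direct, syntactic image of $D$'s adjacency information, which is routine to arrange.

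I expect the \textbf{main obstacle} to be the hardness construction: one must simultaneously (i) encode reachability faithfully, (ii) keep the stack height bounded by the fixed $k$ so that Lemma~\ref{lem:osc-and-height} caps the oscillation, and (iii) guarantee that a run of oscillation \emph{exactly} $k$ (not merely at most $k$) exists if{}f $t$ is reachable, since $k$-emptiness asks for a run of oscillation precisely $k$. Matching the exact-oscillation requirement against the "at most" bound from stack height is the delicate point; I would resolve it by padding each simulated walk with a fixed gadget that forces the footprint to embed $h_k$ but not $h_{k+1}$, so that reachability corresponds exactly to oscillation $k$. The membership half is essentially immediate from the preceding theorem and reduction, so the hardness reduction is where the real work lies.
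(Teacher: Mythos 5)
Your proposal takes essentially the same route as the paper: membership via Theorem~\ref{th:nspace} combined with the logspace reduction from general to reduced form, and hardness via the same PATH-to-PDA encoding (each edge becomes a replace action, reaching \(t\) allows a pop) with stack height kept constant so that Lemma~\ref{lem:osc-and-height} bounds the oscillation. The ``exactly \(k\)'' versus ``at most \(k\)'' subtlety you flag is genuine, and in fact the paper glosses over it by arguing only the case \(k=1\) (where any accepting run with at least two moves is exactly \(1\)-oscillating); your padding gadget is the natural way to extend hardness to an arbitrary fixed \(k\geq 1\).
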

\begin{proof}
We reduce PATH to \(k\)-emptiness problem to show it is NLOGSPACE-hard. The
problem PATH is defined as: Given a directed graph \(G\) and nodes \(s\) and
\(t\) of \(G\), is \(t\) reachable from \(s\)? PATH can be easily reduced to
the emptiness problem of a PDA \(P\) with one state \(q\). We encode the
existence of an edge between two nodes of \(G\) in actions of \(P\). For
example, if there exists an edge between the nodes \(v_1\) and \(v_2\) of
\(G\), then \(P\) has an action \( (q, \varepsilon, v_1) \hookrightarrow (q,
v_2) \). Additionally, \(P\) has an action \( (q, \varepsilon, t)
\hookrightarrow (q, \varepsilon) \). Hence, \(P\) accepts by empty stack
if{}f there is a path from \(s\) to \(t\) in \(G\). It is easy to see that the
maximum stack height during any run of \(P\) is \(1\), hence
Lemma~\ref{lem:osc-and-height} shows that \(P\) is \(1\)-oscillating. 
Finally, we conclude from Theorem~\ref{th:nspace} that the \(k\)-emptiness
problem, for a fixed \(k\) not part of the input, is in NLOGSPACE and we are
done.
\end{proof}

\subsection{Boolean Operations, Determinization and Boundedness}

We show that \(k\)-oscillating languages are closed under union, but they are not closed under intersection and complement. Also, we show that the set of deterministic CFLs (DCFLs) is not a subset of \(k\)-oscillating languages, and vice versa.
\begin{compactdesc}
\item[Union.] Let \(L^{(k)}(P_1)\) and \(L^{(k)}(P_2)\) be two
	\(k\)-oscillating languages for the PDAs \(P_1 \text{ and } P_2\),
	respectively. Then \( L^{(k)}(P_1) \cup L^{(k)}(P_2)\) is also a
	\(k\)-oscillating language for the PDA that accepts the union \(L(P_1)\cup
	L(P_2)\).
\item[Intersection.] Consider \(L_1 = \{ a^n b^n c ^j \mid n, j \geq 0\} \) and
	\(L_2 = \{ a^j b^n c^n \mid n, j \geq 0\} \). It is possible to construct
	1-oscillating PDA \(P_i\) such that \(L(P_i)=L_i\) for \(i=1,2\). However,
	\(L_1 \cap L_2 = \{ a^n b^n c^n \mid  n \geq 0 \} \) is known not to be a
	CFL.
\item[Complement.] Let \(L_1\) and \(L_2\) to be \(k\)-oscillating CFLs.
	Suppose they are closed under complement. We know \( L_1 \cap L_2 =
	\stcomp{\stcomp{L_1} \cup \stcomp{L_2}}\) holds. However, that would mean
	\(L_1\) and \(L_2\) are closed under intersection, a contradiction.
\item[Determinism.] We give an example of a DCFL which is a \(k\)-oscillating
	language for no \(k\), and vice versa. The language of even-length
	palindromes given by the grammar \(G= (\{S\}, \{0,1\}, S, \{ S \to 0S0 \mid
	1S1 \mid \varepsilon \}) \) is not a DCFL, but there exists \(1\)-oscillating
	PDA accepting it. The Dyck language \(L_D = L(G_D)\) that includes all of the harmonics \(
	\{h_i\}_{i\in\N}\) is a \(k\)-oscillating CFL for no \(k\), since harmonics
	form an infinite sequence of Dyck words. However, \(L_D\) is a DCFL. 
\item[Boundedness.] %
Given a context-free language, is it a bounded-oscillation language? 
We only sketch the proof arguments showing this problem is undecidable.
Let us start with the result of J.~Gruska who proved undecidability of the question asking whether a given context-free language is a bounded-\emph{index} language \cite{DBLP:journals/iandc/Gruska71b}. 
On the other hand, Luttenberger and Schlund~\cite{journals/iandc/LuttenbergerS16} proved a result implying the index of a CFL is bounded if{}f so is its dimension.   
Because Theorem~\ref{th:main} implies that the dimension of a CFL is bounded if{}f its oscillation is, we conclude that the problem whether a given CFL is a bounded-oscillation language is undecidable. 
\end{compactdesc}

\bibliographystyle{eptcs}

\begin{thebibliography}{10}
\providecommand{\bibitemdeclare}[2]{}
\providecommand{\surnamestart}{}
\providecommand{\surnameend}{}
\providecommand{\urlprefix}{Available at }
\providecommand{\url}[1]{\texttt{#1}}
\providecommand{\href}[2]{\texttt{#2}}
\providecommand{\urlalt}[2]{\href{#1}{#2}}
\providecommand{\doi}[1]{doi:\urlalt{http://dx.doi.org/#1}{#1}}
\providecommand{\bibinfo}[2]{#2}

\bibitemdeclare{article}{DBLP:journals/jacm/AlurM09}
\bibitem{DBLP:journals/jacm/AlurM09}
\bibinfo{author}{Rajeev \surnamestart Alur\surnameend} \&
  \bibinfo{author}{P.~\surnamestart Madhusudan\surnameend}
  (\bibinfo{year}{2009}): \emph{\bibinfo{title}{Adding nesting structure to
  words}}.
\newblock {\sl \bibinfo{journal}{J. {ACM}}}
  \bibinfo{volume}{56}(\bibinfo{number}{3}), \doi{10.1145/1516512.1516518}.

\bibitemdeclare{book}{AroraB09}
\bibitem{AroraB09}
\bibinfo{author}{Sanjeev \surnamestart Arora\surnameend} \&
  \bibinfo{author}{Boaz \surnamestart Barak\surnameend} (\bibinfo{year}{2009}):
  \emph{\bibinfo{title}{Computational Complexity--A Modern Approach}}.
\newblock \bibinfo{publisher}{Cambridge University Press}, \doi{10.1017/CBO9780511804090}.

\bibitemdeclare{inproceedings}{conf/fsttcs/AtigG11}
\bibitem{conf/fsttcs/AtigG11}
\bibinfo{author}{Mohamed~Faouzi \surnamestart Atig\surnameend} \&
  \bibinfo{author}{Pierre \surnamestart Ganty\surnameend}
  (\bibinfo{year}{2011}): \emph{\bibinfo{title}{Approximating Petri Net
  Reachability Along Context-free Traces}}.
\newblock In: {\sl \bibinfo{booktitle}{{FSTTCS} 2011}}, {\sl
  \bibinfo{series}{LIPIcs}}~\bibinfo{volume}{13}, pp.
  \bibinfo{pages}{152--163}, \doi{10.4230/LIPIcs.FSTTCS.2011.152}.

\bibitemdeclare{article}{journals/ipl/EsparzaGKL11}
\bibitem{journals/ipl/EsparzaGKL11}
\bibinfo{author}{Javier \surnamestart Esparza\surnameend},
  \bibinfo{author}{Pierre \surnamestart Ganty\surnameend},
  \bibinfo{author}{Stefan \surnamestart Kiefer\surnameend} \&
  \bibinfo{author}{Michael \surnamestart Luttenberger\surnameend}
  (\bibinfo{year}{2011}): \emph{\bibinfo{title}{Parikh's theorem: {A} simple
  and direct automaton construction}}.
\newblock {\sl \bibinfo{journal}{Inf. Process. Lett.}}
  \bibinfo{volume}{111}(\bibinfo{number}{12}), pp. \bibinfo{pages}{614--619},
  \doi{10.1016/j.ipl.2011.03.019}.

\bibitemdeclare{article}{journals/toplas/EsparzaGP14}
\bibitem{journals/toplas/EsparzaGP14}
\bibinfo{author}{Javier \surnamestart Esparza\surnameend},
  \bibinfo{author}{Pierre \surnamestart Ganty\surnameend} \&
  \bibinfo{author}{Tom{\'{a}}s \surnamestart Poch\surnameend}
  (\bibinfo{year}{2014}): \emph{\bibinfo{title}{Pattern-Based Verification for
  Multithreaded Programs}}.
\newblock {\sl \bibinfo{journal}{{ACM} Trans. Program. Lang. Syst.}}
  \bibinfo{volume}{36}(\bibinfo{number}{3}), pp. \bibinfo{pages}{9:1--9:29},
  \doi{10.1145/2629644}.

\bibitemdeclare{inproceedings}{conf/dlt/EsparzaKL07}
\bibitem{conf/dlt/EsparzaKL07}
\bibinfo{author}{Javier \surnamestart Esparza\surnameend},
  \bibinfo{author}{Stefan \surnamestart Kiefer\surnameend} \&
  \bibinfo{author}{Michael \surnamestart Luttenberger\surnameend}
  (\bibinfo{year}{2007}): \emph{\bibinfo{title}{An Extension of Newton's Method
  to \emph{omega} -Continuous Semirings}}.
\newblock In: {\sl \bibinfo{booktitle}{{DLT} 2007}}, \bibinfo{volume}{4588},
  \bibinfo{publisher}{Springer}, pp. \bibinfo{pages}{157--168},
  \doi{10.1007/978-3-540-73208-2\_17}.

\bibitemdeclare{inproceedings}{conf/tacas/GantyIK13}
\bibitem{conf/tacas/GantyIK13}
\bibinfo{author}{Pierre \surnamestart Ganty\surnameend}, \bibinfo{author}{Radu
  \surnamestart Iosif\surnameend} \& \bibinfo{author}{Filip \surnamestart
  Konecn{\'{y}}\surnameend} (\bibinfo{year}{2013}):
  \emph{\bibinfo{title}{Underapproximation of Procedure Summaries for Integer
  Programs}}.
\newblock In: {\sl \bibinfo{booktitle}{{TACAS} 2013}}, \bibinfo{volume}{7795},
  \bibinfo{publisher}{Springer}, pp. \bibinfo{pages}{245--259},
  \doi{10.1007/978-3-642-36742-7\_18}.

\bibitemdeclare{article}{Ginsburg_1966}
\bibitem{Ginsburg_1966}
\bibinfo{author}{Seymour \surnamestart Ginsburg\surnameend} \&
  \bibinfo{author}{Edwin~H. \surnamestart Spanier\surnameend}
  (\bibinfo{year}{1966}): \emph{\bibinfo{title}{Finite-Turn Pushdown
  Automata}}.
\newblock {\sl \bibinfo{journal}{SIAM Journal on Control}}
  \bibinfo{volume}{4}(\bibinfo{number}{3}), pp. \bibinfo{pages}{429--453},
  \doi{10.1137/0304034}.

\bibitemdeclare{article}{DBLP:journals/iandc/Gruska71b}
\bibitem{DBLP:journals/iandc/Gruska71b}
\bibinfo{author}{Jozef \surnamestart Gruska\surnameend} (\bibinfo{year}{1971}):
  \emph{\bibinfo{title}{A Few Remarks on the Index of Context-Free Grammars and
  Languages}}.
\newblock {\sl \bibinfo{journal}{Information and Control}}
  \bibinfo{volume}{19}(\bibinfo{number}{3}), pp. \bibinfo{pages}{216--223},
  \doi{10.1016/S0019-9958(71)90095-7}.

\bibitemdeclare{article}{journals/iandc/LuttenbergerS16}
\bibitem{journals/iandc/LuttenbergerS16}
\bibinfo{author}{Michael \surnamestart Luttenberger\surnameend} \&
  \bibinfo{author}{Maximilian \surnamestart Schlund\surnameend}
  (\bibinfo{year}{2016}): \emph{\bibinfo{title}{Convergence of Newton's Method
  over Commutative Semirings}}.
\newblock {\sl \bibinfo{journal}{Inf. Comput.}} \bibinfo{volume}{246}, pp.
  \bibinfo{pages}{43--61}, \doi{10.1016/j.ic.2015.11.008}.

\bibitemdeclare{inproceedings}{conf/mfcs/NowotkaS07}
\bibitem{conf/mfcs/NowotkaS07}
\bibinfo{author}{Dirk \surnamestart Nowotka\surnameend} \&
  \bibinfo{author}{Jir{\'{\i}} \surnamestart Srba\surnameend}
  (\bibinfo{year}{2007}): \emph{\bibinfo{title}{Height-Deterministic Pushdown
  Automata}}.
\newblock In: {\sl \bibinfo{booktitle}{{MFCS} 2007}}, \bibinfo{volume}{4708},
  \bibinfo{publisher}{Springer}, pp. \bibinfo{pages}{125--134},
  \doi{10.1007/978-3-540-74456-6\_13}.

\bibitemdeclare{inproceedings}{conf/fct/Wechsung79}
\bibitem{conf/fct/Wechsung79}
\bibinfo{author}{Gerd \surnamestart Wechsung\surnameend}
  (\bibinfo{year}{1979}): \emph{\bibinfo{title}{The oscillation complexity and
  a hierarchy of context-free languages}}.
\newblock In: {\sl \bibinfo{booktitle}{{FCT}}}, pp. \bibinfo{pages}{508--515}.

\bibitemdeclare{phdthesis}{phd/de/Wich2005}
\bibitem{phd/de/Wich2005}
\bibinfo{author}{Klaus \surnamestart Wich\surnameend} (\bibinfo{year}{2005}):
  \emph{\bibinfo{title}{Ambiguity functions of context-free grammars and
  languages}}.
\newblock Ph.D. thesis, \bibinfo{school}{University of Stuttgart}.

\end{thebibliography}

\appendix

\section{Appendix}
\subsection{Proof of Prop~\ref{prop:unambiguousDyck}}

\begin{proof}
We show that given a word of \(L(G_D)\), it has a unique parse tree or equivalently, a unique leftmost derivation.

In the proof, we denote by \(r_1\) the production rule \(S \rightarrow
\bar{a}\;S\;a\;S\) and by \(r_2\) the production \( S \rightarrow
\varepsilon\). Also, denote a step sequence with \(i\) steps by \(
\Rightarrow^i_G\). Let us assume, by contradiction, that the grammar \(G_D\) is ambiguous. Hence, there exist
two distinct leftmost derivations \(D_1\): \(S \Rightarrow^*_G w_1\) and \(D_2\): \(S
\Rightarrow^*_G w_2\) such that \( w_1 = w_2 \in \{a,\bar{a}\}^*\). Let
\(i\) be the least position in the step sequence such that 
\(D_1\) and \(D_2\) differ at \((i+1)\)-st step. Therefore, we have: \( S \Rightarrow^i_G u_i\),
and we apply \(r_i\) to \((u_i)_{p_i}\) such that there is no \(j < p_i\) for
which \((u_i)_{p_i} = S\). Since \(\len{R} = 2\), to obtain \(D_1 \neq D_2\) we
apply \(r_1\) to \(D_1\) and \(r_2\) to \(D_2\), or the other way around.
Assume the first case (the other one is treated similarly). Then we have:
\[D_1: S \Rightarrow^i_{G_D} u_i  \Rightarrow_{G_D} (u_i)_1 \ldots (u_i)_{p_i - 1} \;\bar{a} \;S \;a \;S \;(u_i)_{p_i + 1} \ldots (u_i)_{\len{u_i}} \Rightarrow^*_{G_D} w_1 \]
\[D_2: S \Rightarrow^i_{G_D} u_i  \Rightarrow_{G_D} (u_i)_1 \ldots (u_i)_{p_i - 1} \; \varepsilon \;(u_i)_{p_i + 1} \ldots (u_i)_{\len{u_i}} \Rightarrow^*_{G_D} w_2 \]
Observe that in \(D_1\) we have that \( (u_{i+1})_{p_i} = \bar{a} \) while
in \(D_2\) we have that \( (u_{i+1})_{p_i}= (u_i)_{p_i+1} = a\), hence \(w_1\neq w_2\).
\end{proof}
\subsection{Proof of Lemma~\ref{lem:poleq}}
\begin{proof}
	\begin{compactdesc}
	\item[Reflexive] If $w_1 \in L_D$, then $w_1 \preceq w_1$ for all $w_1 \in L_D$. Follows from the definition.
	\item[Transitive] For three words $w_1, w_2$ and $w_3$ from the Dyck language, if $w_1 \preceq w_2$ and $w_2 \preceq 
w_3$, it follows $w_1 \preceq w_3$. From $w_1 \preceq w_2$ it 
follows that $w_2$ can be reduced to $w_1$ by successive deletion of $n_1$ matching parentheses, and from $w_2 \preceq 
w_3$ it follows that $w_3$ can be reduced to $w_2$ by 
successive deletion of $n_2$ matching parentheses. Thus, it is possible to reduce $w_3$ to 
$w_1$ by successive deletion of $n_1 + n_2$ matching 
parentheses.
	\item[Antisymmetric] If $w_1 \preceq w_2$ and $w_2 \preceq w_1$, then $w_1 = w_2$. Follows straightforwardly from the definition of the order.\\
	\end{compactdesc}
\end{proof}
\subsection{Proof of Lemma~\ref{lem:alphaprops}}
\begin{proof}
\begin{compactitem}
\item
We prove \((\alpha(n))_{\ll} \in L_D\) by induction on the height of the tree \(t\). If the height is 0, that is, \(t\) consists of only one node \(n\), then from the definition of footprint it follows: \( \alpha(n) = a\), hence \( (\alpha(n))_{\ll} = \varepsilon \in L_D\) and we are done with the base case. Now consider a tree of height \(h+1\) with root \(n\). Assuming \(n\) has \(k\) children, following the definition of footprint we have  
\(
\alpha(n)_{\ll} = \overbrace{\bar{a}\ldots\bar{a}}^{k \text{ times}} \alpha(n_1) \ldots \alpha(n_k)
\). Next, because \(\alpha(n_i)= (\alpha(n_i))_1 \alpha(n_i)_{\ll}\), \( (\alpha(n_i))_1 = a\) and by induction hypothesis
\( \alpha(n_i)_{\ll}\in L_D\), we find that \(\alpha(n)_{\ll}\in L_D\).
\item From the inductive definition of the footprint, it follows \( \alpha(t_1) \preceq \alpha(t) \). The
	definition of rank and \((L_D, \preceq) \) being a partial order concludes the proof.
\end{compactitem}

\end{proof}
\subsection{Dimension of a Tree: a Dyck Word Based Approach}
Given a quasi-tree \(t\), we define its \emph{flattening}, denoted \(\beta(t)\), inductively as follows:
\begin{compactitem}
\item If \(n\) is a leaf then \(\beta(n)= \varepsilon \).
\item If \(n\) has \(k\) children \(n_1\) to \( n_k \) (in that order) then 
	\(\beta(n) = \bar{a}\, \beta(n_1) \, a \, \bar{a} \,\beta(n_2)\, a \, \ldots \, \bar{a}\, \beta(n_k) \, a  \).
\end{compactitem}
Finally, \(\beta(t)=\beta(n)\) where \(n\) is the root of \(t\).

It is easy to see that \(\beta(t) \in L_D\) for every tree \(t\) since the matching relation is inductively given by
\begin{dependency}[anchor=base, baseline=0, 	theme=simple, arc edge , arc angle = 25]
		\begin{deptext}[column sep=0.1cm]
			\(\beta(n)\) \& \(=\) \& \(\bar{a}\) \& \(\beta(n_1)\) \& \(a\) \& \(\bar{a}\) \& \(\beta(n_2)\) \& \(a\) \& \ldots \& \(\bar{a}\) \& \(\beta(n_k)\) \& \(a\) \\ 
			\end{deptext}
			\depedge{3}{5}{}
			\depedge{6}{8}{}
			\depedge{10}{12}{}
	\end{dependency}. %
Thus we can prove the following property.
\begin{lemma}
	Let \(t\) be a tree, \( \dim(t) = \rank(\beta(t))\).
	\label{lem:harmonic-dimension}
\end{lemma}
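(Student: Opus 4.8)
The plan is to prove the stronger, node-local statement that \(\dim(n) = \rank(\beta(n))\) for every node \(n\) of \(t\) (where \(\beta(n)\) denotes the flattening of the subtree rooted at \(n\)), by structural induction on \(t\); the lemma is then the special case where \(n\) is the root. The base case is a leaf, where \(\dim(n)=0\) and \(\beta(n)=\varepsilon\), so that \(\rank(\beta(n))=\rank(\varepsilon)=0\) because \(h_0=\varepsilon\preceq\varepsilon\) and \(h_1\npreceq\varepsilon\). For the inductive step, let \(n\) have children \(n_1,\dots,n_k\) and write \(v_i=\beta(n_i)\), so that \(\beta(n)=\bar{a}\,v_1\,a\;\bar{a}\,v_2\,a\;\cdots\;\bar{a}\,v_k\,a\) is the top-level (forest) decomposition of \(\beta(n)\) into Dyck blocks. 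By the induction hypothesis \(\rank(v_i)=\dim(n_i)=:d_i\). Setting \(D=\max_i d_i\) and letting \(M\) be the number of children attaining \(D\), the definition of dimension gives \(\dim(n)=D\) when \(M=1\) and \(\dim(n)=D+1\) when \(M\ge 2\). It therefore suffices to show that \(\rank(\beta(n))\) equals this same value.

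For the lower bound I would exhibit explicit deletions of matching pairs. In each block \(\bar{a}\,v_i\,a\) the leading \(\bar{a}\) matches the trailing \(a\), so an entire block can be erased by deleting that outer pair and then reducing \(v_i\) to \(\varepsilon\); this already shows \(v_i\preceq\beta(n)\) for each \(i\). When \(M=1\), picking the unique child \(j\) with \(d_j=D\) gives \(h_D\preceq v_j\preceq\beta(n)\). When \(M\ge 2\), picking two children \(i\neq j\) with \(d_i=d_j=D\) and deleting every other block leaves \(\bar{a}\,v_i\,a\;\bar{a}\,v_j\,a\); since \(h_D\preceq v_i\) and \(h_D\preceq v_j\), we obtain \(\hat{h}_D\,\hat{h}_D=h_{D+1}\preceq\beta(n)\). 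In both cases \(\rank(\beta(n))\ge\dim(n)\).

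The main obstacle is the matching upper bound \(h_{\dim(n)+1}\npreceq\beta(n)\), for which I would isolate the following \emph{embedding lemma}: for \(q\ge 1\), if \(h_q\preceq\bar{a}\,v_1\,a\cdots\bar{a}\,v_m\,a\) then either (a) \(h_q\preceq v_i\) for some \(i\), or (b) there are two distinct indices \(i\neq j\) with \(h_{q-1}\preceq v_i\) and \(h_{q-1}\preceq v_j\). Granting this, the upper bound is immediate: if \(M=1\) and \(h_{D+1}\preceq\beta(n)\), then case (a) would force \(\rank(v_i)\ge D+1>d_i\), which is impossible, while case (b) would produce two children of dimension \(\ge D\), contradicting \(M=1\); the case \(M\ge 2\) with \(h_{D+2}\preceq\beta(n)\) is ruled out identically. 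This part is the crux, since it is where the exact (rather than approximate) equality comes from.

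To prove the embedding lemma I would rely on two structural facts about deleting matching pairs. First, a \emph{single-block fact}: for \(p\ge 1\), \(h_p\preceq\bar{a}\,v\,a\) implies \(h_p\preceq v\), because \(h_p\) has two top-level blocks whereas keeping the outer pair of \(\bar{a}\,v\,a\) would leave only one, so that outer pair must be deleted. Second, a \emph{localization fact}: since matching pairs are nested and never cross, each of the two top-level blocks of \(h_q\) in the reduced word is delimited by a matching pair inherited from \(\beta(n)\), and hence its positions lie entirely within a single original block \(\bar{a}\,v_i\,a\). Writing \(B_1,B_2=\bar{a}\,h_{q-1}\,a\) for the two top-level blocks of \(h_q\): if both localize to the same block \(i\), then \(h_q\preceq\bar{a}\,v_i\,a\) and the single-block fact yields case (a); if they localize to distinct blocks \(i\neq j\), then \(\bar{a}\,h_{q-1}\,a\preceq\bar{a}\,v_i\,a\) and likewise for \(j\), and one more application of the single-block fact (together with \(h_{q-1}\preceq\bar{a}\,h_{q-1}\,a\) and transitivity of \(\preceq\), which is available by Lemma~\ref{lem:poleq}) gives case (b). This closes the induction and establishes \(\rank(\beta(t))=\dim(t)\).
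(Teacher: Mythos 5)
Your proof is correct and follows essentially the same route as the paper's: an induction over the tree in which \(\beta(n)\) is decomposed into the top-level blocks \(\bar{a}\,\beta(n_i)\,a\) contributed by the children, with the same case split on whether the maximal child dimension \(D\) is attained by one child or by several, and the same lower-bound construction (erase all blocks except one, respectively except two reduced to \(\bar{a}\,h_D\,a\,\bar{a}\,h_D\,a = h_{D+1}\)). The difference is one of rigor rather than of strategy: the paper's proof asserts the upper bounds \(h_{d_{max}+1}\npreceq\beta(t)\) (unique maximum) and \(h_{d_{max}+2}\npreceq\beta(t)\) (non-unique maximum) with no justification beyond ``from the induction hypothesis and the definition of the flattening,'' whereas you correctly identify this as the crux and supply the missing argument via your embedding lemma---any embedding of \(h_q\) into a concatenation of blocks either lands entirely in one block or splits its two top-level copies of \(\hat{h}_{q-1}\) across two distinct blocks---proved from the non-crossing/inheritance property of matching pairs together with your single-block fact. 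So your write-up is a more complete version of the paper's own argument, making explicit exactly the step the paper leaves implicit.
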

\begin{proof}
The proof is an induction on the height of the tree \(t\). If the height is
\(0\) and \(n\) is the root of \(t\), then \( \beta(n) = \beta(t) =
\varepsilon\). Since \(  h_0 \preceq \varepsilon \) and \( h_1 \npreceq
\varepsilon \), it follows \( \rank(\beta(t)) = 0 = \dim(t)\).\\ Next assume
the height of \(t\) is \(h+1\) and the root \(n\) of \(t\) has \(k\) children
\(n_1\) to \(n_k\). The induction hypothesis states: \( \dim(t_i) =
\rank(\beta(t_i))\) where \(t_i\) is the subtree of \(t\) with root \(n_i\).
If there is a unique maximum \(d_{max} = \max_{i\in \{1,\ldots,k\}}
\dim(t_i)\), then from the definition of dimension it follows \( \dim(t) =
d_{max}\). On the other hand, from the induction hypothesis it follows there is a
unique \(i\) such that \( \rank(\beta(t_i)) = d_{max}\). Hence, from the
definition of the rank we see that \(h_{d_{max}} \preceq \beta(t) \) and
\(h_{d_{max}+1} \npreceq \beta(t) \) , thus \(rank(\beta(t))=d_{max}\). If the
maximum is not unique, that is, \(d_{max}\) is the dimension of more than one
child of the node \(n\), then by the definition of dimension we have \(\dim(t)
= d_{max} + 1\). From induction hypothesis and the definition of the flattening
it follows that \(h_{d_{max}+1} \preceq \beta(t) \) and \(h_{d_{max}+2}
\npreceq \beta(t) \), hence that \(rank(\beta(t))=d_{max}+1\) and we are done
with the inductive case.
\end{proof}
\subsection{Proof of Lemma~\ref{lem:osc_disassem}}
\begin{proof}
One direction follows easily - if one of the above is satisfied, after writing out the footprint of \(r\) accordingly to Definition~\ref{def:quasi-run-foot}, we can easily establish that \( \osc(r) = k\).
The other direction we prove by induction on the length \(m\) of the quasi-run.
\subparagraph*{Basis.}
Necessarily, \(m=1\), \(r\) is already in disassembled form and by definition~\ref{def:quasi-run-foot} \( \alpha(r) = \bar{a} \; a\). Hence, \(\hat{h}_{0} \preceq \alpha(r)\) and \( h_1
\not \preceq  \alpha(r)\), and therefore, \( \osc(r) = 0\).
\subparagraph*{Induction.}
Following Definition~\ref{def:quasi-run-foot}, the footprint of the quasi-run \(r\) is such that:\\ 
\(\alpha(r) =  \bar{a} \; a\; \bar{a}\; \bar{a}\; \alpha'(r_1) \alpha'(r_2)\enspace ,\)
where \(r_1\) and
\(r_2\) are quasi-runs with
less than \(m\) moves
obtained through disassembly as in Lemma~\ref{lem:disassembly}. 
We can thus apply the induction hypothesis on them. 
Assume \( \osc(r) = k\). Then we can reason about the oscillation of the quasi-runs \(r_1\) and \(r_2\) by distinguishing the following two cases:
\begin{compactitem}
\item
In the first case, we can apply induction hypothesis to \(r_1\) and \(r_2\) and we have \( h_{k-1} \preceq \alpha(r_1)\), \( \hat{h}_{k-1} \preceq \alpha(r_2)\) and \( h_{k} \npreceq \alpha(r_i), i=1,2\). We thus find that \(r_1\) is \((k-1)\)-oscillating and \(r_2\) is \((k-1)\)-oscillating.
Going back to 
\(\alpha(r) =  \bar{a} \; a\; \bar{a}\; \bar{a}\; \alpha'(r_1) \alpha'(r_2)\),
we find that \(h_k \preceq \alpha(r)\) since \(h_{k-1} \preceq \alpha(r_1)\) and 
\(\hat{h}_{k-1} \preceq \alpha(r_2)\). We also find that \( h_{k+1} \npreceq \alpha(r) \) since \( h_{k} \npreceq \alpha(r_i), i=1,2\).
\item
In the second case, after applying induction hypothesis, we have \( h_k \preceq \alpha(r_1)\), \( h_{k+1} \npreceq \alpha(r_1)\) and \( \hat{h}_k \npreceq \alpha(r_2)\) (the other case is treated similarly).
We thus find that \(r_1\) is \(k\)-oscillating and \(r_2\) is \(\ell \)-oscillating, with \(\ell \leq k\) (note that, if \(r_2\) is \(k\)-oscillating, then there can be no extra matching pair around \(h_k\) in \(\alpha(r_2)\)).
Going back to 
\(\alpha(r) =  \bar{a} \; a\; \bar{a}\; \bar{a}\; \alpha'(r_1) \alpha'(r_2)\),
we find that \(h_k \preceq \alpha(r)\) since \( h_k \preceq \alpha(r_1) \) and \(h_{k+1} \npreceq \alpha(r)\) since \( \hat{h}_{k} \npreceq \alpha(r_2) \).
\end{compactitem}
Notice that these are the only possibilities that yield the quasi-run \(r\) such that \(\osc(r) = k\).
\end{proof}
\subsection{Proof of Proposition~\ref{prop:footprint_eq}}

\subsubsection{Correctness proof of the CFG2PDA transformation}
We use the following notation in the proofs that follow. Given \(w_1, w_2 \in \Sigma^*\) such that \(w_1\) is a prefix of
\(w_2\), we write \( w_1^{-1} w_2\) to denote the word \(w\in\Sigma^*\) such
that \(w_2 = w_1\, w\) holds. 
\begin{lemma}\label{lem:cfg2pda}
Let \(G = (V,\Sigma,S,\pr)\) be a grammar and \(t\) be a quasi-tree with root \(X \in V\cup\Sigma\cup\{\varepsilon\}\).
Let \(P\) be the PDA resulting from the CFG2PDA transformation
(definition~\ref{def:cfg2pda}). Then there exists a quasi-run \(\ID \vdash_P^{*}
\ID' \) such that \(\tape(\ID)=\yield(t)\) and \(\tape(\ID')=\varepsilon\). 
\end{lemma}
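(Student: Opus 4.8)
The plan is to prove, by induction on the height of \(t\), the following slightly strengthened statement, from which the lemma is immediate. Associate to the root label \(X\) a stack symbol \(\sigma(X)\in\Gamma\) by setting \(\sigma(X)=X\) when \(X\in V\cup\Sigma\) and \(\sigma(\varepsilon)=\bm{e}\). I claim that the move sequence \((q,\yield(t),\sigma(X))\vdash^{*}_{P}(q,\varepsilon,\varepsilon)\) exists. This is exactly the required quasi-run: its first ID carries the single stack symbol \(\sigma(X)\in\Gamma\) and reads \(\yield(t)\), while its last ID has both empty tape and empty stack.

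For the base cases note that a single-node quasi-tree has its root as a leaf, hence labelled by a terminal or by \(\varepsilon\). If \(X=b\in\Sigma\) then \(\yield(t)=b\) and the terminal-matching action \((q,b,b)\hookrightarrow(q,\varepsilon)\) gives \((q,b,b)\vdash_{P}(q,\varepsilon,\varepsilon)\). If \(X=\varepsilon\) then \(\yield(t)=\varepsilon\), \(\sigma(X)=\bm{e}\), and the action \((q,\varepsilon,\bm{e})\hookrightarrow(q,\varepsilon)\) gives \((q,\varepsilon,\bm{e})\vdash_{P}(q,\varepsilon,\varepsilon)\).

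For the inductive step the root is interior, so \(X\in V\), with children labelled \(X_1,\ldots,X_k\); by the quasi-tree conditions \(X\to X_1\cdots X_k\) is a rule of \(\pr\). Write \(t_i\) for the subtree rooted at the \(i\)-th child, so that \(\yield(t)=\yield(t_1)\cdots\yield(t_k)\). If the rule used is \(X\to\varepsilon\), the unique child is an \(\varepsilon\)-leaf; I fire \((q,\varepsilon,X)\hookrightarrow(q,\bm{e})\) to reach \((q,\varepsilon,\bm{e})\) and then conclude by the \(\varepsilon\) base case. Otherwise the right-hand side is nonempty, every \(X_i\in V\cup\Sigma\), and I first fire \((q,\varepsilon,X)\hookrightarrow(q,X_1\cdots X_k)\), reaching \((q,\yield(t),X_1\cdots X_k)\) with \(X_1\) on top; applying the induction hypothesis to each \(t_i\) (noting \(\sigma(X_i)=X_i\)) and chaining the resulting move sequences in order then drives the configuration to \((q,\varepsilon,\varepsilon)\).

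The point that must be handled with care, and which I regard as the crux, is that the induction hypothesis for \(t_i\) yields a move sequence starting from the bare stack \(X_i\) with no trailing input, whereas in the inductive step that sequence must run while \(X_{i+1}\cdots X_k\) still sits below on the stack and \(\yield(t_{i+1})\cdots\yield(t_k)\) still follows on the tape. I would dispatch this with the standard compositionality of PDA moves: since each move inspects only the topmost stack symbol and the leading input symbol, any \((q,u,\gamma)\vdash^{*}_{P}(q,\varepsilon,\varepsilon)\) lifts to \((q,u\,w,\gamma\,\eta)\vdash^{*}_{P}(q,w,\eta)\) for arbitrary \(w\in\Sigma^{*}\) and \(\eta\in\Gamma^{*}\). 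Telescoping this lifting across \(i=1,\ldots,k\) produces the required quasi-run and closes the induction.
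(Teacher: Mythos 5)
Your proof is correct and follows essentially the same route as the paper's: induction on the height of \(t\), with the terminal and \(\varepsilon\) leaves as base cases, then firing the rule action at the root and chaining the children's quasi-runs, where your explicit lifting claim that \((q,u,\gamma)\vdash^{*}_{P}(q,\varepsilon,\varepsilon)\) implies \((q,u\,w,\gamma\,\eta)\vdash^{*}_{P}(q,w,\eta)\) is exactly what the paper invokes as ``ind.\ hyp.\ and PDA semantics''. If anything, your version is slightly more careful, since the map \(\sigma\) and the separate treatment of the rule \(X\to\varepsilon\) make explicit that an \(\varepsilon\)-labelled child corresponds to the stack symbol \(\bm{e}\) rather than to the empty stack word, a point the paper's inductive step glosses over.
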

\begin{proof}
The proof is by induction on the height of \(t\).
\subparagraph*{Basis.} The height of \(t\) is \(0\) and therefore \(t\) consists of a single node
labelled with \(b\in \Sigma\) or \(\varepsilon\).
We conclude from definition~\ref{def:cfg2pda} that \( (q, b, b) \vdash_P (q, \varepsilon,
\varepsilon) \) and \( (q, \varepsilon, \bm{e}) \vdash_P (q, \varepsilon, \varepsilon) \) both of
which are quasi-runs with the desired properties.  
\subparagraph*{Induction.}
Now assume the height of \(t\) is \(h+1\) and the first layer has \(k\geq 1\) children and we denote by \(Y_1\) to \( Y_k \) their respective labels. Observe that each \(Y_i \in V\cup\Sigma \cup \{\varepsilon\}\) and some \(Y_j \in V\) since \(h\geq 1\). We denote with \(t_i\) the quasi-tree rooted at the node labelled with \(Y_i\).
By induction hypothesis, we have the following quasi-runs: \( (q, \yield(t_i), Y_i) \vdash^{*} (q, \varepsilon, \varepsilon)\), for \( 1 \leq i \leq k\). 
Furthermore, it is easy to see that we can “chain” the quasi-runs as follows:
\begin{align*}
	(q, \yield(t_k), Y_k) &\vdash^{*}_P  (q, \varepsilon, \varepsilon) &\text{by ind. hypothesis}\\
	(q, \yield(t_{k-1}) \yield(t_k), Y_{k-1} Y_k) &\vdash^{*} _P (q, \yield(t_k), Y_k) &\text{by ind. hyp. and PDA sem.}\\
 \vdots\\
 (q, \yield(t), Y_1\ldots Y_k) &\vdash^{*}_P (q, \yield(t_1)^{-1} \yield(t), Y_2 \ldots Y_k) & \yield(t)=\yield(t_1)\ldots \yield(t_k)
\end{align*}
Following definition~\ref{def:cfg2pda}, the production \( X \to Y_1 \ldots Y_k\) yields an action \( (q, \varepsilon, X) \hookrightarrow (q, Y_1 \ldots Y_k) \). Hence we have \( (q, \yield(t), X) \vdash_P (q, \yield(t), Y_1 \ldots Y_k)\). Putting everything together, we thus conclude
that \[ (q, \yield(t), Y_1 \ldots Y_k) \vdash^{*}_P (q, \varepsilon, \varepsilon), \] is a quasi-run with the desired properties and we are done with the inductive case.
\end{proof}
\begin{proposition}
Let \(G = (V,\Sigma,S,\pr)\) be a grammar and let \(P\) be the PDA resulting from the CFG2PDA transformation (definition~\ref{def:cfg2pda}): \(L(G) = L(P)\).
\end{proposition}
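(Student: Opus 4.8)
The plan is to prove the two inclusions $L(G)\subseteq L(P)$ and $L(P)\subseteq L(G)$ separately. The first inclusion is essentially a restatement of Lemma~\ref{lem:cfg2pda}. Given $w\in L(G)$, there is a parse tree $t$ of $G$ with root $S$ and $\yield(t)=w$. Applying Lemma~\ref{lem:cfg2pda} with $X=S$ yields a quasi-run $(q,w,S)\vdash^*_P(q,\varepsilon,\varepsilon)$. Since the CFG2PDA transformation sets $q_0=q$ and $\gamma_0=S$, this quasi-run starts in $\ID_s(w)=(q,w,S)$ and ends in an ID with empty tape and empty stack; hence $w$ is accepted and $w\in L(P)$.

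For the converse inclusion I would establish a dual statement relating accepting computations of $P$ back to (leftmost) derivations of $G$. Writing $\pi(\xi)$ for the string over $V\cup\Sigma$ obtained from $\xi\in\Gamma^*$ by deleting every occurrence of the auxiliary marker $\bm{e}$, the key claim is: for all $\xi\in\Gamma^*$ and $w\in\Sigma^*$, if $(q,w,\xi)\vdash^*_P(q,\varepsilon,\varepsilon)$ then $\pi(\xi)\Rightarrow^*_G w$. Taking $\xi=S$ gives $\pi(S)=S\Rightarrow^*_G w$ for every accepted $w$, which is exactly $L(P)\subseteq L(G)$.

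I would prove the claim by induction on the number $m$ of moves in the computation. The base case $m=0$ forces $\xi=\varepsilon$ and $w=\varepsilon$, and $\pi(\varepsilon)=\varepsilon\Rightarrow^*_G\varepsilon$. For the inductive step I would case on the topmost stack symbol $Z$ of $\xi=Z\eta$, which, because $P$ is the single-state top-down machine of Definition~\ref{def:cfg2pda}, uniquely constrains the family of first actions that can fire. If $Z=b\in\Sigma$, the move reads $b$ and pops it, so $w=bw''$ and the induction hypothesis applied to $(q,w'',\eta)$ gives $\pi(\eta)\Rightarrow^*_G w''$, whence $\pi(\xi)=b\,\pi(\eta)\Rightarrow^*_G w$. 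If $Z=\bm{e}$, the move reads nothing and pops $\bm{e}$, and $\pi(\xi)=\pi(\eta)\Rightarrow^*_G w$ follows directly from the hypothesis on $(q,w,\eta)$. If $Z=X\in V$, the move applies some rule $X\to w'$ (with the convention that $w'=\varepsilon$ is realized on the stack by pushing the marker $\bm{e}$), and then $\pi(\xi)=X\,\pi(\eta)\Rightarrow_G w'\,\pi(\eta)\Rightarrow^*_G w$, using the rule at the leftmost position followed by the induction hypothesis on the strictly shorter computation from the new stack.

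The main obstacle is the faithful bookkeeping of the auxiliary symbol $\bm{e}$. Because the $\varepsilon$-rules of $G$ are simulated not by an immediate pop but by pushing $\bm{e}$ and removing it later, an invariant phrased directly on the raw stack content $\xi$ would fail to line up with the sentential forms of $G$; introducing the erasing projection $\pi$ is precisely what reconciles the two formalisms and makes both the base case and the $X\to\varepsilon$ subcase go through cleanly. Once $\pi$ is in place, the remaining reasoning is a routine case analysis, since in the single-state form of Definition~\ref{def:cfg2pda} the top-of-stack symbol determines which actions are applicable.
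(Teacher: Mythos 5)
Your proof is correct, and it is worth noting how it differs from the paper's. For the inclusion \(L(G)\subseteq L(P)\) you do exactly what the paper does: apply Lemma~\ref{lem:cfg2pda} with \(X=S\) and observe that the resulting quasi-run starts in \(\ID_s(w)=(q,w,S)\) and ends with empty tape and empty stack. For the converse inclusion, however, the paper offers no argument at all: it simply asserts that \(L(P)\subseteq L(G)\) ``follows from classical textbook material about the conversion between CFG and PDA.'' You instead prove it, by induction on the number of moves, via the invariant that \((q,w,\xi)\vdash^*_P(q,\varepsilon,\varepsilon)\) implies \(\pi(\xi)\Rightarrow^*_G w\), where \(\pi\) erases the marker \(\bm{e}\). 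This is a genuine improvement rather than mere pedantry: the construction of Definition~\ref{def:cfg2pda} is \emph{not} literally the textbook top-down simulation, because rules \(X\to\varepsilon\) are realized by pushing \(\bm{e}\) and popping it in a separate later move, so the stack content is not in general a suffix of a sentential form of \(G\) and the standard invariant fails as stated. Your projection \(\pi\) is exactly the repair needed to make the classical induction go through for this variant, and your case analysis on the top-of-stack symbol (terminal, marker, or variable) is complete because the three subalphabets of \(\Gamma\) are disjoint and determine the applicable actions. In short: same route as the paper on the easy direction, and a self-contained proof where the paper waves at the literature—with the added value that your argument actually accounts for the nonstandard \(\bm{e}\)-mechanism that the citation glosses over.
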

\begin{proof}
	For the left-to-right inclusion, let \(t\) be a parse tree of \(G\). It
	follows from Lemma~\ref{lem:cfg2pda} with \(X\) set to \(S\) that
	\(\stack(\ID)=S\), hence that \(P\) has a run on input \(\yield(t)\), and finally
	that \(L(G) \subseteq L(P)\).  
	
	The other direction \(L(P)\subseteq L(G)\) also
	holds. It follows from classical textbook material about the conversion
	between CFG and PDA.
\end{proof}

\begin{proof}[Proof of Proposition~\ref{prop:footprint_eq}]
The proof goes by induction on the height of \(t\). As usual with induction, we
prove a slightly different statement. First, the equality is given by \(
(\alpha(t))_{\ll} = (\alpha(r))_{\ll} \). The equality \(\alpha(t)=\alpha(r)\)
follows from the fact that \(\alpha(t),\alpha(r) \in L_D\). Second, we prove
the statement for quasi-trees and quasi-runs. More precisely, 
given a quasi-tree \(t\) %
there exists a quasi-run \( r = \ID \vdash_P^* \ID' \) such that %
\(\tape(\ID)=\yield(t)\), %
\(\tape(\ID')=\varepsilon\), and %
\( (\alpha(t))_{\ll} = (\alpha(r))_{\ll} \).

\subparagraph*{Basis.} Let \(t\) be a quasi-tree of height \(0\). 
Necessarily, \(t\) consists of a single node labelled by a terminal \(b\in\Sigma\) or \(\varepsilon\).
It follows from definition of the footprint of a quasi-tree that \(\alpha(t)_{\ll} = a\).
As we showed in the proof of lemma~\ref{lem:cfg2pda} there exists a quasi-run \(r\) given by
\( (q, b, b) \vdash_P (q,\varepsilon,\varepsilon)\) for the case \(b\in\Sigma\) and
\( (q,\varepsilon, \bm{e}) \vdash_P (q,\varepsilon,\varepsilon)\) for the case \(\varepsilon\).

In either case, the footprint of the quasi-run \(r\) is such that \( \alpha(r)_{\ll} = a\) and we are
done with the base case.

\subparagraph*{Induction.}
Suppose the quasi-tree \(t\) has height \(h+1\) and its root, labelled \(X\), has \(k\) children \(n_1\) to \(n_k\) labelled \(Y_1\) to \(Y_k\) with
\( Y_i \in V \cup \Sigma \cup \{\varepsilon\}\) for all \(i\). 
The definition of \(\alpha(t)\) shows that \( \alpha(t)_{\ll} = a\; \underbrace{\bar{a}\ldots\bar{a}}_{k \text{ times}}\; \alpha(n_1) \ldots \alpha(n_k)\).

We conclude from the proof of Lemma~\ref{lem:cfg2pda} that there exists a quasi-run \(r\) 
\begin{multline*}
(q, \yield(t), X) \vdash (q, \yield(t), Y_1\ldots Y_k) \vdash^{*} (q, \yield(t_1)^{-1} \yield(t), Y_2 \ldots Y_k) \vdash \\
\ldots \vdash	(q, \yield(t_{k-1}) \yield(t_k), Y_{k-1} Y_k) \vdash^{*}  (q, \yield(t_k), Y_k)
\vdash^{*}  (q, \varepsilon, \varepsilon) 
\end{multline*}
built upon the quasi-runs \( r_i = (q, \yield(t_i), Y_i) \vdash^{*} (q, \varepsilon, \varepsilon) \) for each \(i\).
The footprint of \(r\) is such that \( (\alpha(r))_{\ll} = a\; \underbrace{\bar{a}\ldots\bar{a}}_{k \text{ times}}\; \alpha'(r_1) \ldots \alpha'(r_k) \).
The induction hypothesis shows that \( \alpha(t_i)_{\ll} = \alpha'(r_i) \) for all \(i\), and since \( \alpha(t_i)_{\ll} = \alpha(n_i)\) we have that
\( \alpha(t)_{\ll} = \alpha(r)_{\ll} \) and we are done.
\end{proof}

\subsection{Bounded-oscillation PDA, the General Case}

\begin{definition}[\(k\)-oscillating pushdown automaton] Let \(P\) be a pushdown automaton given by \( P = (Q, \Sigma, \Gamma, \delta, 
q_0, \gamma_0) \), and let \( k\) be a fixed natural number.
We define the \textit{\(k\)-oscillating PDA} \\ \( P^{(k)} := (Q, \Sigma, \Gamma'(k), \delta^{(k)}, q_0, \gamma_0^{(k)}) \) 
and  \(\delta^{(k)}\) consists exactly of the following actions  (we assume \( b \in \Sigma \text{ or } b =  \varepsilon \)).

\begin{compactenum}
\item If \( \delta \text{ contains } (q, b, \gamma) \hookrightarrow (p, \varepsilon) \), then:

\begin{compactitem}
\item \( \delta^{(k)} \text{ contains } (q, b, \gamma^{(0)})  \hookrightarrow (p, \varepsilon ) \), and \((q, b, \hat{\gamma}^{(0)}) \hookrightarrow (p, \varepsilon ) \)
\end{compactitem}

\item If \( \delta \text{ contains } (q, b, \gamma) \hookrightarrow (p, \xi) \), with \( \xi \in \Gamma\), then
for all \( 0 < d \leq k \) we have
\begin{compactitem}
\item \( \delta^{(k)} \text{ contains } (q, b, \gamma^{(d)}) \hookrightarrow (p, \xi^{(d)} ) \), and
\((q, b, \hat{ \gamma}^{(d)}) \hookrightarrow (p, \hat{ \xi}^{(d)} ) \) 
\end{compactitem}

\item If \( \delta \text{ contains } (q, b, \gamma) \hookrightarrow (p, \xi_1 \xi_2 \ldots \xi_n) \), with \( \xi_1, \xi_2, \ldots, \xi_n \in \Gamma \) and \( n > 1 \) then \(\delta^{(k)}\) is such that it
contains \( (q, b, \nu) \hookrightarrow (p, \beta_1 \; \beta_2 \; \ldots \; \beta_n  ) \) if and only if one of the following holds:
\begin{compactenum}[\upshape(\itshape a\upshape)]
\item \( \nu = \gamma^{(d)} \) for some \( 0
	< d \leq k\) and there exists \( I \subseteq \{1, 2, \ldots, n\} \) with \( \vert I \vert
	\geq 2 \) such that \( \beta_i \in \{ \xi^{(d-1)}_i, \hat{\xi}^{(d-1)}_i \} \) for all \( i \in I\)
	and \( \beta_j \in \{ \xi^{(0)}_j, \ldots,  \xi^{(d-2)}_j
	\}\) for all \( j \notin I \). Additionally, for exactly one position \(i\) in \(I\) it holds that \(
	\beta_i = \hat{\xi}^{(d-1)}_i \) and this position can not be \(\min(I)\).
\item \( \nu = \hat{\gamma}^{(d)} \) for some \( 0
	< d \leq k\). This case is the same as above except that the set \(I\) cannot include \(\{n\}\). 
\item \( \nu =  \gamma^{(d)} \text{ or } \nu = \hat{\gamma}^{(d)} \) for some \( 0 < d \leq k\)
	and \( \beta_i \in \{ \xi^{(d)}_i, \hat{\xi}^{(d)}_i \} \) for exactly one \( i
	\in \{ 1, 2, \ldots, n \} \), and \( \beta_j \in \{
	\xi^{(0)}_j, \ldots,  \xi^{(d-1)}_j \}\) elsewhere.
	Additionally, \(\beta_i = \hat{\xi}^{(d)}_i \) if{}f \(i=n\) and \(\nu = \hat{\gamma}^{(d)}\).
\end{compactenum}

\end{compactenum}
\label{def_kPDA}		
\end{definition}
\subsubsection{The Size of \texorpdfstring{\(P^{(k)}\)}{a k-oscillating Pushdown Automaton}}
From definition~\ref{def_kPDA}, we can calculate the size of the automaton \(P^{(k)} \) constructed from the automaton \( P\). If the original automaton \(P\) has an action as at point 1 in definition~\ref{def_kPDA}, then \( P^{(k)}\) has two actions : one for the stack symbol \(\gamma\) such that \(\ann{(\gamma)}=0\), and another one in case when \(\ann{(\gamma)}=\hat{0}\).

Similarly, by looking at the point \(2\) in definition~\ref{def_kPDA} we can conclude that the PDA \(  P^{(k)}\) has \(2 k\) actions for a single action of that type in the original automaton \(P\).

The third case in the definition~\ref{def_kPDA} contributes mostly to the expansion of the size of \( P^{(k)}\). From the point 3.a in the definition~\ref{def_kPDA}, an action of \(P\) of this type gives, for a fixed \(d, 1 < d \leq k\), the following number of actions of \( P^{(k)}\):
\[ \sum_{l=2}^{n} {n \choose l} (l-1) (d-1)^{n-l}, \]
where \(n = \len{\xi}\).
For \(d = 1\), the number of actions is \(n-1\), hence the total number of actions in \( P^{(k)}\) obtained by the construction rule 3.a is:
\[ \sum_{d=2}^{k} \sum_{l=2}^{n} {n \choose l} (l-1) (d-1)^{n-l}  + (n-1).\]
In the similar fashion we calculate the number of actions of \(P^{(k)}\) for the point 3.b in definition~\ref{def_kPDA}:
\[ \sum_{d=2}^{k} \sum_{l=2}^{n-1} {n \choose l} (l-1) (d-1)^{n-l}  + (n-2).\]
From the point 3.c, we obtain that \( P^{(k)}\) has \( n d^{n-1} \) actions for a fixed \( d, 1 \leq d \leq k\) and a symbol \( \gamma\) in left side of the action such that \(\ann{(\gamma)}=d\). Similarly, it has \( n d^{n-1} \) for \(\gamma\) such that \(\ann{(\gamma)}=\hat{d}\). Thus, the total number of actions obtained from one action as at point 3.c in \(P\) is:
\[ \sum_{d=1}^{k} 2 n  d^{n-1} = 2 n \sum_{d=1}^{k} d^{n-1}. \]
To conclude, if the original PDA has \(m_1, m_2 \text{ and } m_3\) actions of type 1, 2 and 3, accordingly to the Definition~\ref{def_kPDA}, the automaton \(P^{(k)}\) has the following number of actions:
\[ 2 m_1 + 2\;k\;m_2 + m_3 \Bigg[ 2 \sum_{d=2}^{k} \sum_{l=2}^{n} {n \choose l} (l-1) (d-1)^{n-l} + (n-2) + 2 n \sum_{d=1}^{k} d^{n-1} \Bigg] \]
The above calculated size of \( \delta^{(k)}\) can be bounded above by \(O(\len{\delta} \cdot k^n)\), where \(n\) is the maximum size of all \(\xi\) through all of the actions \( (q, b, \gamma) \hookrightarrow (p, \xi)\) of \(P\).

In addition to the actions, the stack alphabet of \( P^{(k)}\) is obtained by altering and expanding the stack alphabet \(\Gamma\)  of \(P\). For each \(\gamma \in \Gamma\), we define \(\gamma^{(d)}\) and \(\hat{\gamma}^{(d)}\), for all \(d \in \{0, \ldots, k\}\). Thus \( \len{\Gamma'^{(k)}} = 2 \len{\Gamma} (k+1)\).

\subsubsection{Proof of Lemma~\ref{lem:quasi-runs}}

\begin{proof}
The proof is an induction on \(m\), the length of the run.
\subparagraph*{Basis.}
Necessarily, \(m=1\) and \( \ann{(\stack(\ID_0))} \in \{0,\hat{0}\} \) by definition~\ref{def_2PDA}.  
Also \( \alpha(r) = \bar{a} \; a\) by definition~\ref{def:quasi-run-foot}. Hence, \( \hat{h}_0 \preceq \alpha(r)\) and \( h_1
\not \preceq  \alpha(r)\), and therefore, \( \osc(r) = 0\).
\subparagraph*{Induction.}
Following definition~\ref{def:quasi-run-foot}, the footprint of the quasi-run \(r\) is such that: 
\[\alpha(r) =  \bar{a} \; a\; \bar{a}\; \bar{a}\; \alpha'(r_1) \alpha'(r_2)\enspace ,\]
where \(r_1\) and
\(r_2\) are quasi-runs with
less than \(m\) moves obtained through dissasembly as shown in Lemma~\ref{lem:disassembly}. 
We can thus apply the induction hypothesis on them. Let us rewrite them as \(r_1 = J_0, \ldots, J_{m_j}\) and \( r_2 = K_0, \ldots, K_{m_k}\).

We distinguish two cases whether \( \ann{(\stack(\ID_0))} = d \) or \(
\ann{(\stack(\ID_0))} = \hat{d}\).
\begin{compactitem}
\item By Definition~\ref{def_2PDA}, there is an action of type either (c), (d) or (e).

If we are in the case (c) (the case (d) is treated similarly) then we have
that \(\ann(\stack(J_0)) = d\), \(\ann(\stack(K_0)) = \ell\) with \(\ell\) between \(0\)
and \(d-1\).
We thus find that \(r_1\) is \(d\)-oscillating and \(r_2\) is \(\ell\)-oscillating.
Going back to 
\[\alpha(r) =  \bar{a} \; a\; \bar{a}\; \bar{a}\; \alpha'(r_1) \alpha'(r_2)\enspace ,\]
we find that \(h_d \preceq \alpha(r)\) since \(h_d \preceq \alpha(r_1)\); we
also find that \(h_{d+1}\npreceq \alpha(r)\) because \(h_{\ell+1} \npreceq \alpha(r_2)\) and
\(\ell+1 \leq d \), hence \(h_{d} \npreceq \alpha(r_2)\).

If we are in the case (e) then we have that \(\ann(\stack(J_0))=d-1\) and
\(\ann(\stack(K_0)) = \widehat{d-1}\).
We thus find that \(r_1\) is (\(d-1\))-oscillating and \(r_2\) is such that \( \hat{h}_{d-1} \preceq \alpha(r_2)\) and \( h_d \npreceq \alpha(r_2)\).
Going back to 
\(\alpha(r) =  \bar{a} \; a\; \bar{a}\; \bar{a}\; \alpha'(r_1) \alpha'(r_2)\),
we find that \(h_d \preceq \alpha(r)\) since \(\hat{h}_{d-1} \preceq
\bar{a} \; \alpha(r_1)\; a\) and \(\hat{h}_{d-1} \preceq
\alpha(r_2)\); we also find \(h_{d+1} \npreceq \alpha(r)\) since \(h_d\preceq \alpha(r_i)\) for no \(i=1,2\).

\item We switch to the case where \(\ann(\stack(\ID_0)) = \hat{d} \).
The case (a) is similar to the case (c) by making the additional observation
that \( \hat{h}_{d} \preceq \alpha(r) \) since \(h_d \preceq \alpha(r_1)\) and
\(\alpha(r_1)\) is surrounded by a matching pair (from \( (\alpha(r))_3 = \bar{a}\) to \( (\alpha'(r_2))_1 = a\)). 

Finally, for the case (b) the induction hypothesis shows that  
\(\hat{h}_{d} \preceq \alpha(r_2) \), hence \(\hat{h}_d \preceq \alpha(r) \).
We have \( h_{d+1} \npreceq \alpha(r) \) since \(h_{\ell+1} \npreceq \alpha(r_1)\) and \(\ell+1\leq d\).
\end{compactitem}
\end{proof}

\subsection{Proof of Theorem~\ref{th:k_runs}}

\begin{compactenum}[\upshape(\itshape a\upshape)]
\item Since every run is a quasi-run, follows as a direct consequence of Lemma~\ref{lem:quasi-runs}. Also by removing from the run of \(P^{(k)}\) all the annotations from the stack symbols we obtain a run of \(P\).
\item Our proof is by induction. As typical, we prove a stronger statement:
given a quasi-run \(r\) of \(P\) from \(\ID\) with \(\stack(\ID)=\gamma\) 
if \(h_k \preceq \alpha(r)\)  and \(h_{k+1}\npreceq \alpha(r)\)
then 
there is a quasi-run of \(P^{(k)}\) from \( (\state(\ID),\tape(\ID),\gamma^{(k)})\);
and if, moreover, \(\hat{h}_k \preceq \alpha(r)\)
then 
there is a quasi-run of \(P^{(k)}\) from \\
\( (\state(\ID),\tape(\ID),\hat{\gamma}^{(k)})\).

The statement of the theorem then consequently generalizes to the runs of \(P\). The proof is an induction on the length \(m\) of the quasi-run \(r\) of \(P\).
\subparagraph*{Basis.}  Since \(m=1\), \(r\) is such that \(r = \ID \vdash \ID'\). 

Hence, we have \( r = (q, b, \stack(\ID)) \vdash_P (q, \varepsilon,
\varepsilon) \) for some \(b\in\Sigma\cup\{\varepsilon\}\). Furthermore, \(\osc(r)=0\). 
Following definition~\ref{def_2PDA}, there exists a quasi-run of \(P^{(0)}\) given by \( (q, b, \stack(\ID)^{(0)}) \vdash_{P^{(0)}} (q, \varepsilon, \varepsilon) \). 
\subparagraph*{Induction.} 
Since \(P\) is in reduced form the first step in \(r\) is
given by \(\ID \vdash \ID_1\) where \(\stack(\ID_1)=\xi\), \(\xi\in\Gamma^2\).
Thus we can disassemble \(r\) as in Lemma~\ref{lem:disassembly}: the first move and
two quasi-runs \(r_1=J_0\ldots J_{m_j}\) and \(r_2=K_0\ldots K_{m_k}\) 
such that \(\stack(J_0)=(\xi)_1\), \(\stack(K_0)=(\xi)_2\), \(m_j, m_k < m\).
Hence we can apply Lemma~\ref{lem:osc_disassem} to \(r\) and we have two cases for the harmonics
embedded in the footprints of \(r_1\) and \(r_2\) as in that lemma.
Let us first look at the second case. Then \( h_k \preceq \alpha(r_i)\), \(i\) being either \(1\) or \(2\). Assume \(i=1\), other case is treated similarly. From induction hypothesis we have that there exist quasi-runs \(r_1\) of \(P^{(k)}\) from \( (\state(J_0), \tape(J_0),
(\xi)_1^{(k)}) \) and \(r_2\) of \(P^{(\ell)}\) from \( (\state(K_0), \tape(K_0),
(\xi)_1^{(l)}) \). After applying action as defined at point 2.c in Definition~\ref{def_2PDA} for the first move, we can assemble those two quasi-runs back into a quasi-run of \(P^{(k)}\).
Now look at the first case. Similarly, from induction hypothesis applied to \(r_1\) and \(r_2\) and by action as defined at point 2.e in Definition~\ref{def_2PDA}, we obtain a quasi-run of \(P^{(k)}\).
The case where \( \ann(\stack(\ID_0)) = \hat{k}\) is treated similarly.
\end{compactenum}

\end{document}